\documentclass[11pt]{article}

\usepackage[margin=1in]{geometry}

\usepackage{amsmath}
\usepackage{amsthm}
\usepackage{amsfonts}
\usepackage{booktabs}
\usepackage{bbm}
\usepackage{natbib}
\usepackage{graphicx}
\usepackage{subfigure}
\usepackage[font=small]{caption}
\usepackage{xcolor}
\usepackage{enumitem}
\usepackage{amssymb}
\usepackage{comment}
\usepackage{url}
\usepackage{lineno}

\theoremstyle{definition}
\newtheorem{definition}{Definition}

\theoremstyle{plain}
\newtheorem{lemma}{Lemma}

\theoremstyle{plain}
\newtheorem{theorem}{Theorem}

\usepackage{tikz}
\usetikzlibrary{matrix,decorations.pathreplacing,calc}

\usepackage[plain,noend]{algorithm2e}

\def\T{{ \mathrm{\scriptscriptstyle T} }}

\renewcommand{\P}{\text{pr}}
\def\indic{\text{I}}	

\def\glt{generalized lower triangular{} }
\def\N{\mathbb{N}}
\def\mhprob{\pi}

\newcommand{\Hm}{H_m} 
\newcommand{\gammaPB}{\gamma}
\newcommand{\alphaPB}{\tilde{\alpha}}

\newcommand{\pq}[1]{q_{#1}}
\newcommand{\Lq}[1]{L_{#1}}
\newcommand{\jr}{r}
\newcommand{\qtilde}{\tilde{q}}

\newcommand{\SFS}[1]{\textcolor{green}{#1}} 
\newcommand{\commentS}[1]{\footnote{\textcolor{blue}{#1}}}  
\newcommand{\commentSold}[1]{}  

\graphicspath{{art/}}

\title{The generalized lower triangular process prior}

\author{%
Antonio Canale\thanks{Department of Statistical Sciences, University of Padova,
Via Cesare Battisti 241, 35121 Padova, Italy; and School of Mathematics,
University of Edinburgh, King's Buildings, EH9 3DF, Edinburgh, UK.
Email: \texttt{antonio.canale@unipd.it}.}
\and
Sylvia Fr\"uhwirth-Schnatter\thanks{Institute for Statistics and Mathematics,
WU Wien, Vienna, Austria. Email: \texttt{sfruehwi@wu.ac.at}.}
}
\date{}

\begin{document}

\maketitle

\begin{abstract}
In the analysis of multivariate data, factor models represent a powerful technique for both reducing dimensionality and facilitating  qualitative understanding of the latent determinants governing the observed data. Interpretability, however, is often hindered by the non-identifiability of factor loading matrices due to rotational invariance. A prevailing strategy to address this issue is to consider positive lower triangular structures. A relaxation of the latter condition has been recently proposed in the literature with the so called generalized lower triangular structure. However, its Bayesian implementation presents substantial challenges, requiring complex reversible-jump algorithms and making prior elicitation difficult because of the limited interpretability of the model parameters. In this paper, we introduce a generalized lower triangular process prior that allows both global and within-component sparsity structures for learning both the number of factors and possible sparsity structures within the vector of observations. The proposed approach  provides straightforward prior parameters elicitation exploiting possibly different prior information on the rank and sparsity characteristics.  We also explore connections with the Indian buffet process, providing further insight into the sparsity structure induced by the proposed prior. Posterior computation can be performed resorting to a Gibbs sampler with Metropolis-Hastings moves. The efficacy of the proposed method is demonstrated through comprehensive simulations and the analysis of  the Big Five Personality Test data.
\end{abstract}

\medskip
\noindent\textbf{Keywords:}
Cumulative shrinkage priors; Factor analysis; Indian buffet process; Increasing shrinkage;

\section{Introduction}
\label{sec:intro}
Factor models provide a convenient representation of the variability of $m$ dimensional random vectors by decomposing the $m\times m$ variance matrix $\Omega$ as  $\Omega = \Lambda \Lambda^\top+\Sigma,$
 where $\Lambda$ is a $m \times H$ matrix with typically $H \ll m$ and $\Sigma$ is a $m \times m$ diagonal matrix with positive entries $\sigma^2_{i}$ with $i  = 1, \dots, m$. 
This decomposition is equivalent to model each observation ${y}_t = (y_{1t}, \dots, y_{mt})^\top$ ($t=1, \dots, T$) as a linear combinations of independent latent Gaussian factors 
$\eta_{t}=(\eta_{1t},
\ldots, \eta_{Ht})^\top$ of dimension $H$
and an additive error term, i.e. 
\begin{align}
\label{eq:fact-mod}
y_t=\Lambda \eta_t +\epsilon_t, \qquad  \epsilon_t \sim N_m(0, \Sigma), \qquad {\eta_{t}} 
\sim 
N_{{H}} 
(0, I).
\end{align}
The factor analytic decomposition is thus condensing the available information into independent aggregates of lower dimension which could be potentially meaningful. 
Given this perspective, factor models find widespread use in various applications to handle datasets with numerous observed variables, believed to capture a smaller set of latent variables. This is particularly successful when the latter can be easily interpreted, such as in the case of individual skills and behaviors in psychology  \citep{fabrigar1999}, economics \citep{heckman2006}, genomics \citep{stein2018enter}, or ecology \citep{ovaskainen2017make}.

One of the crucial aspects associated to interpretability centers around determining the actual number of latent factors  $H$. The Bayesian approach offers an elegant mechanism to learn this quantity directly from the data, with notable contributions from \citet{Bhattacharya2011}, \citet{legramanti2020}, and \citet{fruehwirth2023GCUSP}.
Moreover, advancing the qualitative interpretation of Bayesian factor models beyond the inherent parsimony of the low-dimensional representation is achievable through the incorporation of shrinkage priors for the elements $\lambda_{ij}$ of $\Lambda$. Within a sparse loading matrix $\Lambda$, the impact of a specific factor is confined to a subset of the $m$ observed variables, facilitating interpretability by, for instance, grouping observed variables into dependent blocks or identifying variables uncorrelated with others. Recent years have witnessed numerous contributions in the realm of sparse Bayesian factor analysis. Analogous to variable selection shrinkage, these approaches broadly categorize into methods employing continuous shrinkage priors or point mass mixture priors, also called  spike-and-slab priors. Noteworthy instances of continuous shrinkage priors in Bayesian factor models encompass works by \citet{zhao2016}, \citet{rockova2017}, \citet{kastner2019}, and \citet{legramanti2020}. Conversely, spike-and-slab priors, adopted since the pioneering work of \citet{west2003}, have been explored by \citet{carvalho2008genes}, \citet{conti2014},  \citet{kaufmann2019} and \citet{fruehwirth23BA}, and \citet{schiavon2022}, and \citet{canaleschiavonstolf2026}.

Interpretation of the latent factors through their loading matrix, however, is often hindered by the rotational invariance property affecting $\Lambda$. It is well known, indeed, that  for any orthogonal matrix $P$ of dimension  $H \times H$, we have $\tilde{\Lambda} = \Lambda P$ with $\tilde{\Lambda}\tilde{\Lambda}^\top = \Lambda \Lambda^\top.$
This rotational ambiguity has received significant attention in the fields of econometrics, statistics, and machine learning. The most widely accepted approach for addressing rotational invariance is to consider positive lower triangular loading matrices. Examples of studies adopting this condition include \citet{geweke1996}, \citet{lopes2004}, and \citet{west2003}. 
Positive lower triangular structures have been generalized by \citet{fruehwirth2018} and \citet{fruehwirth2023GLT} via the so called generalized lower triangular structure with  \citet{fruehwirth23BA} introducing a spike-and-slab prior for Bayesian inference under these settings. However, its implementation present challenges, necessitating complex reversible jump algorithm. Additionally, prior specification may be not straighforward due to difficulties in terms of disentangling the local sparsity and the latent rank learning. 

To solve these issues, in this paper, we introduce a generalized lower triangular process prior that, consistently with the recent stream of literature on sparse Bayesian factor models, is able to jointly learn the latent factor dimension and the factor loading's local sparsity structure. Additionally, it ensures identifiability of the factor loading matrix, thereby permitting qualitative interpretability of results without the need for extensive post-processing \citep{papastamoulis2022,poworoznek}.
Our method is grounded in a probabilistic generative formulation of the generalized lower triangular structure, evocative of the Indian buffet process \citep{griffiths2011}. This formulation effectively disentangles the global shrinkage targeting the learning on the number of factors from the local shrinkage governing each factor loading.  Our proposed approach allows for straightforward elicitation of the prior hyperparameters and ease in posterior computation.

\section{Generalized lower triangular process prior}
\label{sec:GLTstructure}

\subsection{Notation and definitions}

Let  $\delta_{ij} = \indic\{\lambda_{ij} \neq 0 \}$ be a binary variable indicating if the 
element in row $i$ and column $j$ of $\Lambda$ is different from zero or not. Let $\Delta$ be the indicator matrix of the same dimension of $\Lambda$ containing the $\delta_{ij}$s.  We introduce the following definition:

\begin{definition}
For each column of $\Lambda$, the pivot row $\ell_j$ is defined as the row index of the first non-zero factor loading in column $j$. The element $\lambda_{\ell_jj}$ is called the leading factor loading of column~$j$.    
\end{definition}

The lower triangular structure assumes that $\ell_j=j$. In the \glt{} structure \citep{fruehwirth2023GLT} the pivots are still ordered, i.e.  $\ell_1 <  \ell_2 < \dots < \ell_H$ but are not necessarily constrained to be equal to their column index.  Notably, the indicator matrix $\Delta$ with \glt structure is a special case of a left-ordered binary matrix commonly adopted in features allocation models \citep{griffiths2011} as further discussed in Section \ref{sec:GLTIBP}. 

Under a \glt{} structure, let  $W_j = \ell_{j} - \ell_{j-1}-1$ be the number of rows between two {adjacent}
pivots with the convention that $\ell_{0}=0$ and $W_1 = \ell_1 -1$. 
While the pivots are quantities naturally associated with the columns of the  matrices $\Lambda$ and $\Delta$, one can equivalently characterize the rows. Define $K_i$ recursively as the maximum between $K_{i-1}$ and the largest column index corresponding to a nonzero loading in row $i$, with the convention that $K_0=0$. It follows that knowledge of any one of the three sequences---$\ell_j$, $W_j$, or $K_i$---fully determines the other two. Figure~\ref{fig:sketch} reports a pictorial representation of these different quantities.

\begin{figure}[h!]
\centering
\begin{tikzpicture}[scale=0.8, font=\footnotesize]

\def\s{0.7}   
\def\ncols{7}  
\def\nrows{11} 

\foreach \r in {1,...,11}{
  \foreach \c in {1,...,7}{
    \draw[gray!50, line width=0.3pt] (\c*\s,-\r*\s) rectangle ++(\s,-\s);
  }
}

\fill[gray!30] (1*\s,-2*\s) rectangle ++(\s,-14*\s);
\fill[gray!30] (2*\s,-4*\s) rectangle ++(\s,-12*\s);
\fill[gray!30] (3*\s,-6*\s) rectangle ++(\s,-10*\s);
\fill[gray!30] (4*\s,-7*\s) rectangle ++(\s,-9*\s);
\fill[gray!30] (5*\s,-10*\s) rectangle ++(\s,-6*\s);
\fill[gray!30] (6*\s,-12*\s) rectangle ++(\s,-4*\s);
\fill[gray!30] (7*\s,-13*\s) rectangle ++(\s,-3*\s);

\fill[black] (1*\s,-1*\s) rectangle ++(\s,-\s);
\fill[black] (2*\s,-4*\s) rectangle ++(\s,-\s);
\fill[black] (3*\s,-6*\s) rectangle ++(\s,-\s);
\fill[black] (4*\s,-7*\s) rectangle ++(\s,-\s);
\fill[black] (5*\s,-10*\s) rectangle ++(\s,-\s);
\fill[black] (6*\s,-12*\s) rectangle ++(\s,-\s);
\fill[black] (7*\s,-13*\s) rectangle ++(\s,-\s);

\foreach \r in {1,...,15}{
  \foreach \c in {1,...,7}{
    \draw[gray!50, line width=0.3pt] (\c*\s,-\r*\s) rectangle ++(\s,-\s);
  }
}

\node at (1.5*\s,0) {$1$};
\node at (2.5*\s,0) {$2$};
\node at (3.5*\s,0) {$\cdots$};
\node at (4.5*\s,0) {$j-1$};
\node at (5.5*\s,0) {$j$};
\node at (6.5*\s,0) {$\cdots$};
\node at (7.5*\s,0) {$H$};

\node[left] at (1*\s,-1.5*\s) {$1$};
\node[left] at (1*\s,-2.5*\s) {$\vdots$};
\node[left] at (1*\s,-3.5*\s) {$\vdots$};
\node[left] at (1*\s,-4.5*\s) {$\ell_2$};
\node[left] at (1*\s,-5.5*\s) {$\vdots$};
\node[left] at (1*\s,-6.5*\s) {$\vdots$};
\node[left] at (1*\s,-7.5*\s) {$\ell_{j-1}$};
\node[left] at (1*\s,-8.5*\s) {$\vdots$};
\node[left] at (1*\s,-9.5*\s) {$\vdots$};
\node[left] at (1*\s,-10.5*\s) {$\ell_j$};
\node[left] at (1*\s,-11.5*\s) {$\vdots$};
\node[left] at (1*\s,-12.5*\s) {$\vdots$};
\node[left] at (1*\s,-13.5*\s) {$\ell_H$};
\node[left] at (1*\s,-14.5*\s) {$\vdots$};
\node[left] at (1*\s,-15.5*\s) {$m$};

\node at (9*\s,-1.5*\s) {$1$};
\node at (9*\s,-2.5*\s) {$1$};
\node at (9*\s,-3.5*\s) {$1$};
\node at (9*\s,-4.5*\s) {$2$};
\node at (9*\s,-5.5*\s) {$2$};
\node at (9*\s,-6.5*\s) {$\vdots$};
\node at (9*\s,-7.5*\s) {$j-1$};
\node at (9*\s,-8.5*\s) {$j-1$};
\node at (9*\s,-9.5*\s) {$j-1$};
\node at (9*\s,-10.5*\s) {$j$};
\node at (9*\s,-11.5*\s) {$\vdots$};
\node at (9*\s,-12.5*\s) {$\vdots$};
\node at (9*\s,-13.5*\s) {$H$};
\node at (9*\s,-14.5*\s) {$\vdots$};
\node at (9*\s,-15.5*\s) {$H$};

\node at (9*\s,0) {$K_i$};

\draw[decorate,decoration={brace,mirror,amplitude=6pt}]
  (0.1*\s,-8*\s) -- (0.1*\s,-10*\s)
  node[midway,left=10pt,align=center] {$W_j$};

\draw[decorate,decoration={brace,mirror,amplitude=6pt}]
  (0.1*\s,-2*\s) -- (0.1*\s,-4*\s)
  node[midway,left=10pt,align=center] {$W_1$};
\end{tikzpicture}
\caption{
Schematic representation of the constrained support of $\Delta$.
For each column $j$, the pivot is located at row $\ell_j$ (black), entries below the pivot are admissible (gray), 
$K_i$ denotes the number of pivots above or at row $i$, and $W_j$ denotes the number of rows between two consecutive pivot locations.}
\label{fig:sketch}
\end{figure}


In the next section we define a stochastic mechanism to draw  the  \glt binary matrix $\Delta$ and later define the prior for the $\lambda_{ij}$ conditionally on  $\delta_{ij} =1$.

\subsection{Generalized lower triangular  process}
\label{sec:GLTprocess}

  We first introduce a stochastic   process for a matrix $\Delta_\infty$ with infinite rows and columns and then truncate the number of rows to $m$ to obtain the prior distribution for {$\Delta$}.
  Consistently, we assume that $\{\ell_j\}_{j\geq 1}$,  $\{W_j\}_{j\geq 1}$, and  $\{K_i\}_{i\geq 1}$, are infinite 
  {random} sequences.

For $i  \in \N$ where $\N$ is the set of natural numbers, we let $K_i$ be an increasing stochastic process  with independent unitary jumps at jump times $\ell_j$. Let the jump probability, i.e. $\P( K_i = K_{i-1}+1 \mid \nu) = \nu$. 
Notably,  the marginal distribution for the single $K_i$ given $\nu$ is Binomial with $i$ trials and success probability $\nu$, i.e.  ${K_i 
\mid \nu}
\sim \mbox{Bin}(i, \nu).$ Indeed, drawing a sequence of $K_i$s is equivalent to draw a sequence of independent Bernoulli random variables with success probability $\nu$ and then taking 
{their cumulative sum.}
	
Consistent with this, each $W_j$ represents the waiting time between jump $j-1$ and jump $j$ and is a geometric random variable with  probability equal to $\nu$. Thus,  the joint probability distribution of the first $h$ waiting times $W_j$s is equal to
\begin{eqnarray} \label{wait}
\P(W_1=w_1, \dots, W_{h}=w_h \mid \nu) =\nu^{h}  
{\prod_{j=1}^{h}} 
 (1-\nu)^{w_j} =  \nu^{h}  (1-\nu)^{\sum_{j=1}^{h} w_j}.
\end{eqnarray}

Given the pivots, the \glt structure forces the presence of  structural zeros above the generalized diagonal. Below the pivots, instead,  null entries may or may not occur. In accordance with the sparse factor models literature, we allow the possibility of further local sparsity below the pivots. However, differently from \citet{fruehwirth23BA}, we disconnect this local sparsity below the pivots from the mechanism tied to the pivots. This strategic disentanglement effectively separates the global shrinkage, which targets learning the number of factors, from the local shrinkage that governs each factor loading. As clarified later in the interpretation presented in Section \ref{sec:GLTIBP} and substantiated by empirical evidence in Section \ref{sec:simulation}, this approach facilitates prior calibration and improves posterior inference.

Specifically, we assume $ \P(\delta_{ij} = 1 \mid \ell_j, \tau_j) = \tau_{j} $ for $i>\ell_j$ where  $\tau_j$ are independent and identically distributed beta random variable, $\tau_j \sim \mbox{Beta}(\alpha \beta, \beta).$ 
Previous discussion allows us to introduce the following definition.

\begin{definition}[Generalized lower triangular process]
For $i  \in \N$,  let $K_i$ be an increasing stochastic process  with independent unitary jumps at jump times $\ell_j$ with jump probability $\P( K_i = K_{i-1}+1 \mid \nu) = \nu$.  For $i> \ell_j$ let $ \P(\delta_{ij} = 1 \mid \ell_j, \tau_j) = \tau_{j} $. For $j\geq 1$, let  $\tau_j \sim  \mbox{Beta}(\alpha \beta, \beta)$ and $\nu \sim \mbox{Beta}(a_\nu, b_\nu)$.
Then, the left-ordered infinite dimensional binary matrix $\Delta_\infty$ follows a \glt{} process. 
\label{def:gltprocess}
\end{definition}

Note that if we let the \glt{}  process running, it will eventually lead to a pivot $\ell_j=\sum_{l=1}^{j-1} {W}_{l}+j$ larger than $m$.
Knowing that in practice the largest pivot is bounded by $m$,  we run this process until $\ell_j > m$. Such a truncation allows to define a prior distribution on generalized lower triangular matrices with $m$ rows and a random number of columns. To this end, we introduce the following result.
\begin{lemma}
	Let $h \in \N$, such that  $\ell_1<\dots<\ell_h\leq m$ and $\ell_{h+1}>m$.
Then
\begin{equation}
	\P(\ell_1=l_1, \ldots, \ell_{h}=l_h,  \ell_{h+1}>m \mid  \nu ) =
	\nu^h (1-\nu)^{m-  h} ,
\end{equation}
and similarly
\begin{equation}
	\P(W_1=w_1, \ldots, W_{h}=w_h,  W_{h+1} >  m-\sum_{l=1}^{h} W_l - h \mid \nu ) =
	\nu^h (1-\nu)^{m-  h} .
\end{equation}
\label{lemma:W}
\end{lemma}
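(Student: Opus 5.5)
The plan is to work with the Bernoulli representation of the process $K_i$: conditionally on $\nu$, the increments $B_i = K_i - K_{i-1}$, $i\geq 1$, are independent Bernoulli$(\nu)$ variables, and the pivots $\ell_j$ are exactly the indices at which $B_i=1$. The event $\{\ell_1=l_1,\ldots,\ell_h=l_h,\ \ell_{h+1}>m\}$, for fixed integers $l_1<\dots<l_h\leq m$, will then be rewritten as an event about the first $m$ increments only.

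Concretely, I will check that this event coincides with $\{B_i=1 \text{ for } i\in\{l_1,\ldots,l_h\},\ B_i=0 \text{ for } i\in\{1,\ldots,m\}\setminus\{l_1,\ldots,l_h\}\}$. If the $h$ jumps occur exactly at $l_1,\ldots,l_h$ and the $(h+1)$-th jump occurs after $m$, then no other index $i\leq m$ carries a jump. Conversely, such a configuration of $B_1,\ldots,B_m$ forces the first $h$ jump times to be $l_1,\ldots,l_h$ and the next jump to lie beyond $m$. Independence of the increments then gives the probability $\nu^h(1-\nu)^{m-h}$. The key point is that nothing is imposed on $B_i$ for $i>m$, so no further factor appears.

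For the second identity, I will use the bijection $W_j=\ell_j-\ell_{j-1}-1$ with $\ell_0=0$, so that $\ell_h=\sum_{l=1}^h W_l + h$. The condition $\ell_{h+1}>m$ is equivalent to $W_{h+1}=\ell_{h+1}-\ell_h-1>m-\ell_h-1$, that is, $W_{h+1}\geq m-\ell_h$. I will reconcile this with the statement's threshold $m-\sum_l W_l-h = m-\ell_h$, reading the strict inequality as $W_{h+1}\geq m-\ell_h$ (off-by-one convention), or equivalently as $\ell_{h+1}>m$. Under that reading the event is the same as in the first part, so the probability is again $\nu^h(1-\nu)^{m-h}$.

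As a cross-check, I will also compute it directly from \eqref{wait}. Multiplying $\nu^h(1-\nu)^{\sum w_j}$ by the geometric tail $\P(W_{h+1}\geq m-\ell_h\mid\nu)=(1-\nu)^{m-\ell_h}$ gives exponent $\sum w_j + m-\sum w_j - h = m-h$, which agrees with the first part. The main obstacle I expect is this boundary bookkeeping for the tail inequality; otherwise the argument is a direct independence computation.
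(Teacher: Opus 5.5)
Your proof is correct and is essentially the argument the paper relies on: the paper gives no separate proof of this lemma, but its proof of Theorem~\ref{thm:uniformpivots} uses the same count of i.i.d.\ Bernoulli$(\nu)$ pivot indicators over the first $m$ rows, and the $W$-version follows, as in your cross-check, from \eqref{wait} multiplied by the geometric tail. Your off-by-one reading is also the right one, because the literal strict inequality $W_{h+1} > m-\sum_{l=1}^h W_l - h$ would give $\nu^h(1-\nu)^{m-h+1}$, so the event has to be read as $W_{h+1}\geq m-\ell_h$, i.e.\ $\ell_{h+1}>m$.
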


It  now clearly results that the number of columns $H$ of a \glt{} matrix $\Delta$ with $m$rows obtained by truncating $\Delta_\infty$ at the $m$th row, is a random variable implicitly obtained by running the process. We formalize this in the following definition.
\begin{definition}[Generalized lower triangular prior]
Let $\Delta_\infty$ be a left-ordered infinite dimensional binary matrix following a \glt process. Let $H \in \N$ the random number satisfying Lemma \ref{lemma:W}. Then the $m \times H$ random matrix $\Delta$ obtained by truncating $\Delta_\infty$ at row $m$ and column $H$, follows a finite \glt prior. 
\label{def:gltprior}
\end{definition}


To complete the prior specification for the factor model \eqref{eq:fact-mod}, we assume a diffuse distribution for the entries $\lambda_{ij}$ associated with $\delta_{ij}=1$, as well as for the idiosyncratic variances $\sigma_i^2$. To maintain analytical tractability, we adopt a simple structure henceforth, directing our attention towards evaluating the efficacy of the \glt{} prior. It is pertinent to note, however, that more complex distributions can be considered \citep[see][]{fruehwirth23BA}. Specifically, prior elicitation is completed by assuming 
\begin{align}
    (\lambda_{ij} \mid \delta_{ij}= 1 ) \sim N(0, \kappa \sigma_i^2),  \quad\quad \sigma^2_i\sim \mbox{I-Ga}(a_\sigma, b_\sigma),
    \label{eq:priorlambdasigma}
\end{align}
where $\kappa ,$ $a_\sigma$, and $b_\sigma$ are positive hyperparameters.

\subsection{Generalized lower triangular prior properties}

To facilitate prior elicitation, it is useful to clarify how the hyperparameters $\alpha, \beta, a_\nu,$ and $b_\nu$ control key structural features of the induced matrix $\Delta.$ 
For instance, the expected number of  columns is derived in the following Lemma.
\begin{lemma}
\label{lem:priornumberofcols}
    Let $\Delta$ be assigned a \glt prior. Then, 
\[
\mathbb{E}[H]  = \mathbb{E}[K_m] = m \frac{a_\nu}{a_\nu+b_\nu}.
\]
\end{lemma}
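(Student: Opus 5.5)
First identify $H$ with $K_m$, then compute $\mathbb{E}[K_m]$ by conditioning on $\nu$ and applying the tower property.

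By Definition~\ref{def:gltprior}, $H$ is the number of pivots with $\ell_j \le m$, as characterized in Lemma~\ref{lemma:W}: we have $\ell_H \le m$ and $\ell_{H+1} > m$. The process $K_i$ increases by one exactly at the jump times $\ell_j$ and starts at $K_0=0$. Hence $K_m$ counts the jumps occurring at or before row $m$, which is exactly $H$. This gives $H = K_m$ almost surely, and in particular $\mathbb{E}[H]=\mathbb{E}[K_m]$.

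Next, condition on $\nu$. As noted in Section~\ref{sec:GLTprocess}, $K_m$ is the cumulative sum of $m$ independent Bernoulli$(\nu)$ jump indicators, so $K_m \mid \nu \sim \mbox{Bin}(m,\nu)$ and $\mathbb{E}[K_m \mid \nu] = m\nu$. The local sparsity variables $\tau_j$ and $\delta_{ij}$ below the pivots play no role, because they do not affect the pivot locations.

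Finally, apply the tower property with $\nu \sim \mbox{Beta}(a_\nu,b_\nu)$:
\[
\mathbb{E}[K_m] = \mathbb{E}\bigl[\mathbb{E}[K_m\mid\nu]\bigr] = m\,\mathbb{E}[\nu] = m\frac{a_\nu}{a_\nu+b_\nu}.
\]
As a check, one could instead sum $h\binom{m}{h}\nu^h(1-\nu)^{m-h}$ from Lemma~\ref{lemma:W} over all $\binom{m}{h}$ pivot configurations of size $h$. This recovers the same binomial law for $H$ given $\nu$.

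The only step needing care is the identification $H=K_m$ at the truncation boundary. One must confirm that the pivot $\ell_{h+1}>m$ is not counted, and that the case $h=0$ (no pivot at or before row $m$) gives $K_m=0$. Both follow directly from the definitions, so I expect no real obstacle.
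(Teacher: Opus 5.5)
Your proof is correct and follows the paper's intended argument. The paper gives no separate proof of this lemma; it relies on the remark in Section~\ref{sec:GLTprocess} that $K_m\mid\nu\sim\mbox{Bin}(m,\nu)$, together with $H=K_m$ under truncation and averaging over $\nu\sim\mbox{Beta}(a_\nu,b_\nu)$, and these are exactly your steps (your cross-check via Lemma~\ref{lemma:W} is a valid confirmation of the binomial law).
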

Lemma~\ref{lem:priornumberofcols} shows that the hyperparameters $a_\nu$ and  $b_\nu$ directly govern the prior expected number of columns of the loading  matrix, which grows linearly with the number of variables $m$. 
{Notably, under the prior of  \citet{fruehwirth2023GLT}, the corresponding expectation admits the large-$m$ approximation $ \mathbb{E} (H|\alphaPB, \gammaPB,
 \Hm) \approx  \alphaPB \gammaPB \log m - \alphaPB \gammaPB \psi(\gammaPB)$,
 where $\psi(x)=\Gamma'(x)/\Gamma(x)$ is the digamma function, see Appendix~\ref{UGLTprior} for details.}
In that formulation, the prior expected number  of columns is driven by the parameters $\alphaPB$ and $\gammaPB$, with $\alphaPB$ also impacting other structural properties of the matrix $\Delta$, as discussed below.

Another important feature to be studied is the prior expected sparsity level of $\Delta$. This is characterized in the following lemmas whose proofs are reported in Appendix~\ref{Proofs}.
\begin{lemma}
\label{lem:priorsparsity}
Let $\Delta$ be assigned a \glt prior. Let $N_m$ denote the total number of non-zero entries in  $\Delta$. Then, for every $h\geq1$
\[
\mathbb{E}\left[\frac{N_m}{mH}\mid \alpha,  H = h\right] = \frac{1}{m}+ \frac{\alpha}{\alpha+1}\frac{m-1}{2m}.
\]
\end{lemma}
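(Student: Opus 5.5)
The plan is to compute $\mathbb{E}[N_m \mid \alpha, H=h]$ directly. We decompose $N_m$ column by column into the forced pivot entries and the random entries strictly below each pivot, and then divide by $mh$. Write
\[
N_m=\sum_{j=1}^{h}\Bigl(1+\sum_{i=\ell_j+1}^{m}\delta_{ij}\Bigr),
\]
since by the \glt{} structure the entry at the pivot $\ell_j$ equals one, the entries above it are structural zeros, and the entries below it are the only random ones. The constant part contributes exactly $h$, which after division by $mh$ gives the term $1/m$.

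For the random part, I would condition on the pivots $\ell_1,\dots,\ell_h$ and on $\tau_j$. By Definition~\ref{def:gltprocess}, given $\ell_j$ and $\tau_j$, the count $\sum_{i>\ell_j}\delta_{ij}$ is $\mbox{Bin}(m-\ell_j,\tau_j)$. Since $\tau_j\sim\mbox{Beta}(\alpha\beta,\beta)$ is independent of the pivot mechanism, $\mathbb{E}[\tau_j]=\alpha\beta/(\alpha\beta+\beta)=\alpha/(\alpha+1)$. Hence
\[
\mathbb{E}[N_m\mid \ell_1,\dots,\ell_h,H=h]=h+\frac{\alpha}{\alpha+1}\sum_{j=1}^h (m-\ell_j).
\]
Note that $\beta$ disappears here, which matches the statement.

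The key step, and the only non-routine one, is the conditional law of the pivots given $H=h$. By Lemma~\ref{lemma:W}, for fixed $\nu$ the joint probability of any admissible configuration $\ell_1<\dots<\ell_h\le m<\ell_{h+1}$ equals $\nu^h(1-\nu)^{m-h}$. This value does not depend on the particular values $l_1,\dots,l_h$. Therefore, conditionally on $H=h$ and $\nu$, the set $\{\ell_1,\dots,\ell_h\}$ is uniformly distributed over the $\binom{m}{h}$ subsets of size $h$ of $\{1,\dots,m\}$. Because this conditional law does not involve $\nu$, the same uniformity holds after integrating out $\nu\sim\mbox{Beta}(a_\nu,b_\nu)$, so the hyperparameters $a_\nu,b_\nu$ also drop out.

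Under a uniform random $h$-subset, each element of $\{1,\dots,m\}$ is included with probability $h/m$. It follows that
\[
\mathbb{E}\Bigl[\sum_j \ell_j\Bigr]=\frac{h}{m}\cdot\frac{m(m+1)}{2}=\frac{h(m+1)}{2},
\qquad
\mathbb{E}\Bigl[\sum_j(m-\ell_j)\Bigr]=hm-\frac{h(m+1)}{2}=\frac{h(m-1)}{2}.
\]
Plugging this in gives
\[
\mathbb{E}[N_m\mid\alpha,H=h]=h+\frac{\alpha}{\alpha+1}\,\frac{h(m-1)}{2}.
\]
Dividing by $mh$ yields the claimed $\frac{1}{m}+\frac{\alpha}{\alpha+1}\frac{m-1}{2m}$.
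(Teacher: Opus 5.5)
Your proof is correct and takes essentially the same route as the paper's. Both decompose $N_m$ into the $h$ pivot entries plus the below-pivot Beta--Binomial counts, integrate out $\tau_j$ to obtain the factor $\alpha/(\alpha+1)$, and use the uniform law of the pivot set given $H=h$, under which each row is a pivot with probability $h/m$, to get $\mathbb{E}[\sum_j(m-\ell_j)\mid H=h]=h(m-1)/2$. The paper writes that last step through the identity $\sum_j(m-\ell_j)=\sum_i(m-i)K_i^*$ with pivot-row indicators $K_i^*$, which is the same linearity-of-expectation argument; your explicit check that the uniform law does not depend on $\nu$, so it survives integrating out $\nu$, is left implicit in the paper.
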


Notably, under the \citet{fruehwirth2023GLT} prior a similar closed-form expression of Lemma \ref{lem:priorsparsity} exists and is a function of the parameter $\alphaPB$ (see Appendix~\ref{UGLTprior}). Hence,  under 
the \citet{fruehwirth2023GLT} prior, such parameter simultaneously regulates both the within-column  sparsity and the prior expected number of columns.
In our construction, by contrast, these two quantities are governed by distinct  sets of hyperparameters: $\alpha$ determines the within-column sparsity as discussed in Lemma \ref{lem:priorsparsity}, while  $(a_\nu,b_\nu)$ control the expected number of columns as discussed in Lemma \ref{lem:priornumberofcols}. 
This separation allows for clearer and more flexible prior elicitation, allowing practitioners 
to tune expected prior sparsity and latent dimension independently and does not require deep hierarchical constructions as required in  \citet{fruehwirth2023GLT}. 
As discussed in Section~\ref{sec:simulation},  this structural decoupling also yields concrete empirical advantages. One may ask the role of the $\beta$ hyperparameter, which is not appearing in previous lemmas. Clearly, this parameter is related to the prior variability of the \glt process as discussed in the next Lemma.

\begin{lemma}\label{lem:column_sparsity_var}
Let $\Delta$ be assigned a \glt prior. Let $A_j=m-\ell_j>0$ be the available entries below the $j$th pivot, 
$N_j=\sum_{i=\ell_j+1}^m \delta_{ij}$ the number of ones below  pivot $j$, and $
R_j=N_j/A_j$ a measure of conditional column sparsity. 
Then, 
\[
\mathrm{Var}(R_j\mid A_j)
= \frac{\alpha}{A_j(\alpha+1)^2}
+
\frac{\alpha(A_j-1)}{A_j(\alpha+1)^2\{(\alpha+1)\beta+1\}}.
\]
\end{lemma}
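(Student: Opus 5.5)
Conditionally on $A_j$ (equivalently on the pivot $\ell_j$) and on $\tau_j$, the indicators $\delta_{ij}$, $i=\ell_j+1,\dots,m$, are i.i.d.\ Bernoulli$(\tau_j)$ by Definition~\ref{def:gltprocess}. Hence $N_j\mid A_j,\tau_j\sim\mbox{Bin}(A_j,\tau_j)$. Since $\tau_j\sim\mbox{Beta}(\alpha\beta,\beta)$ independently of the pivot mechanism, $N_j\mid A_j$ is beta-binomial. The plan is to compute $\mathrm{Var}(R_j\mid A_j)$ with the law of total variance, conditioning on $\tau_j$, and then rearrange into the stated form.

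\textbf{Step 1: the two pieces of the total variance.} We have
\[
\mathrm{Var}(R_j\mid A_j)=\mathbb{E}\{\mathrm{Var}(R_j\mid A_j,\tau_j)\mid A_j\}+\mathrm{Var}\{\mathbb{E}(R_j\mid A_j,\tau_j)\mid A_j\}.
\]
Here $\mathbb{E}(R_j\mid A_j,\tau_j)=\tau_j$ and $\mathrm{Var}(R_j\mid A_j,\tau_j)=\tau_j(1-\tau_j)/A_j$. The beta moments for shape parameters $a=\alpha\beta$ and $b=\beta$, with $a+b=(\alpha+1)\beta$, give
\[
\mathbb{E}\tau_j=\frac{\alpha}{\alpha+1},\qquad
\mathrm{Var}(\tau_j)=\frac{ab}{(a+b)^2(a+b+1)}=\frac{\alpha}{(\alpha+1)^2\{(\alpha+1)\beta+1\}}.
\]
Also
\[
\mathbb{E}\{\tau_j(1-\tau_j)\}=\mathbb{E}\tau_j-\mathbb{E}\tau_j^2=\mathbb{E}\tau_j(1-\mathbb{E}\tau_j)-\mathrm{Var}(\tau_j)=\frac{\alpha}{(\alpha+1)^2}-\mathrm{Var}(\tau_j).
\]

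\textbf{Step 2: assemble.} Write $V=\mathrm{Var}(\tau_j)$. Substituting Step 1 into the decomposition gives
\[
\mathrm{Var}(R_j\mid A_j)=\frac{1}{A_j}\Bigl(\frac{\alpha}{(\alpha+1)^2}-V\Bigr)+V=\frac{\alpha}{A_j(\alpha+1)^2}+\frac{A_j-1}{A_j}\,V.
\]
Inserting the expression for $V$ yields exactly the formula in Lemma~\ref{lem:column_sparsity_var}.

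\textbf{Main point needing care.} The only nontrivial step is justifying that conditioning on $A_j$ does not alter the law of $\tau_j$ or the conditional independence of the $\delta_{ij}$ below the pivot. This holds because the pivots are driven by $\nu$ and the jump process $K_i$, which are independent of $\{\tau_j\}$. The truncation at row $m$ of Definition~\ref{def:gltprior} only affects which columns exist, through the event in Lemma~\ref{lemma:W}, and that event depends on the pivots alone. Conditioning on $A_j>0$ therefore leaves $\tau_j\sim\mbox{Beta}(\alpha\beta,\beta)$ and the Bernoulli structure intact. The remainder is routine algebra.
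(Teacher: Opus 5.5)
Your proof is correct, and it reaches the result by a slightly different route than the paper.

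\textbf{The paper's route.} It quotes the closed-form beta--binomial variance, $\mathrm{Var}(N_j\mid A_j)=A_j\alpha\{(\alpha+1)\beta+A_j\}/[(\alpha+1)^2\{(\alpha+1)\beta+1\}]$, and divides by $A_j^2$. It then obtains the two terms by splitting the numerator as $\{(\alpha+1)\beta+1\}+(A_j-1)$.

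\textbf{Your route.} You derive that variance from scratch via the law of total variance in $\tau_j$. You then move $\mathrm{Var}(\tau_j)/A_j$ from the within-column piece to the between-column piece, and your algebra checks out.

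\textbf{What your version buys.} It is self-contained and makes two things transparent that the paper's algebra hides.
\begin{enumerate}
\item The lemma's first term is the binomial variance evaluated at the prior mean $\mathbb{E}\tau_j$. It is not the expected conditional variance $\mathbb{E}\{\tau_j(1-\tau_j)\}/A_j$, which is what the total-variance formula literally produces.
\item The $\beta$-dependence enters only through $(1-1/A_j)\mathrm{Var}(\tau_j)$. This is why the variance is free of $\beta$ when $A_j=1$.
\end{enumerate}
You also explicitly justify a point the paper uses tacitly: conditioning on the pivot, and on the truncation event defining $H$, leaves $\tau_j\sim\mathrm{Beta}(\alpha\beta,\beta)$ and the conditional Bernoulli structure below the pivot intact.

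\textbf{What the paper's version buys.} It is shorter if one takes the beta--binomial variance formula as known.
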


Lemma~\ref{lem:column_sparsity_var} shows that the conditional variance of the within-column sparsity ratio $R_j$
admits a natural decomposition into two components.
The first term reflects binomial sampling variability among the $A_j$ entries below the pivot and decreases as the number
of available entries increases. The second term, instead, captures the additional variability induced by heterogeneity across columns through the beta prior on $\tau_j$. For fixed $\alpha$, increasing $\beta$ concentrates the prior on $\tau_j$ around its mean $\alpha/(\alpha+1)$ and therefore reduces the prior variance. Notably, this variance does not depend on $\beta$ if $A_j=1$.

\subsection{A feature allocation interpretation}
\label{sec:GLTIBP}

The \glt process introduced in the previous section exhibits notable affinities with the Indian buffet process \citep{griffiths2011}. The Indian buffet process constitutes a stochastic process that defines a probability distribution over equivalence classes of sparse binary matrices characterized by a finite number of rows and an unbounded number of columns. This concept is often introduced by a culinary metaphor, where binary matrices symbolize a scenario in which $m$ customers (rows)  choose a random number of dishes (columns). The procedural framework proceeds as follows: the $m$ customers sequentially enter a restaurant, with the initial customer choosing a random number of dishes. Subsequently, the $i$-th customer selects dishes in proportion to the choices made by previous customers, augmented by a random number of new dishes. In the context of the Indian buffet process, the random number of new dishes is sampled from a Poisson distribution, while the selection of previous dishes follows suitably specified Bernoulli random variables.

Our \glt process can be interpreted with a similar metaphor. The main difference lies in the fact that each customer chooses at most one new dish with probability $\nu$. Looking at the iterative conditional distributions of a \glt process row by row, one can see that the conditional distribution of this $\nu$ given  $K_i$  is
\[
\nu \mid K_i \, \sim\,   \mbox{Beta}\left(a_\nu + K_i, b_\nu +i -K_i\right).
\]
The expectation of this distribution is given by $(a_\nu + K_i)(a_\nu + b_\nu + i)^{-1}.$ This expression implies that customer $i+1$ is expected to have a probability of choosing a new dish that is less than or equal to that of the previous customer. In the context of our motivating factor model, this signifies that the probability of adding a factor exhibits a stochastic decreasing behaviour with increasing $i$ consistently with current approaches in Bayesian infinite factor loadings models \citep{Bhattacharya2011,legramanti2020}.

Similarly, customer $i+1$ chooses the $j$-th dish among the $K_{i}$ already chosen with probability $\tau_j$ which conditional distribution is 
\[
\tau_j \mid \ell_j, d_{ij} \, \sim\,   \mbox{Beta}\left(\alpha \beta + d_{ij}, \beta + i-\ell_j -d_{ij}\right),
\]
where $d_{ij}$ is the number of customers choosing dish $j$ after its first choice,  i.e. $d_{ij} =  \sum_{l=\ell_j+1}^i \delta_{lj}$. Its expectation is  $
{(\alpha \beta + d_{ij})}  ( \alpha\beta + \beta + i-\ell_j)^{-1}.$

\section{Posterior computation}
\label{sec:posterior}

\subsection{Markov Chain Monte Carlo sampling scheme}

In this section, we present a Markov chain Monte Carlo scheme for sampling from  the posterior distribution of the parameters of model~\eqref{eq:fact-mod} under the prior specification introduced in Section~\ref{sec:GLTprocess}. To update the  indicator matrix $\Delta$, the sampler exploits the probabilistic construction of the \glt{} process to define tailored Metropolis--Hastings moves. Posterior  sampling proceeds by iterating the steps summarized below, with details of the  main updates provided in the following subsections.

The proposed methodology is implemented in the \texttt{R} package \texttt{glt},  with the most computationally intensive routines written in \texttt{C++} and  interfaced with \texttt{R}. The package implements the full posterior sampling  scheme described below and is available at  \texttt{github.com/tonycanale/glt}.

\begin{enumerate}
	\item Update $H$. See Section \ref{sec:mhK}.
	\item Update  the probability $\nu$ of starting a new column  from 
	\[
	\nu \mid H \, \sim\,   \mbox{Beta}\left(a_\nu + H, b_\nu +m -H\right).
	\]
	\item Update the pivots locations  given the number of columns $H$.  See  Section \ref{sec:mhGLT}.
	\item Update the column specific probabilities $\tau_j$ for $j = 1, \dots, H$ from
	\[
	{\tau}_j \mid \ell_j, \Delta \, \sim\,   \mbox{Beta}\left(\alpha \beta + d_j, \beta + m-\ell_j -d_j\right),
	\]
	where $d_{j} =  \sum_{i=\ell_j+1}^m \delta_{ij}$. 
	\item Update the $\delta_{ij}$ below the pivots, given the pivots and marginalizing with respect to $\Lambda$ and $\Sigma$ (column-wise).
\item Update the variances $\sigma_i^2$ from 
	\[
	\sigma_i^{-2} \mid - \sim \mbox{Ga} \left\{ a_\sigma  + \frac{T}{2}, b_\sigma + \frac{Q_i}{2}  \right\},
	\]
 where $Q_i =
{y_{i\cdot} ^\top y_{i\cdot}
- y_{i\cdot} ^\top
\eta_i^\delta 
[\kappa^{-1} I_{q_i} + (\eta_i^\delta) ^\top \eta_i^\delta ]^{-1}
(\eta_i^\delta) ^\top
y_{i \cdot}}$
	\item Update  the 
  factor loadings $\lambda_{ij}$ 
 (row-wise), 
 given $\Delta$, the latent factors and $\sigma_i^{2}$. 
 Clearly $\lambda_{ij}| \delta_{ij}=0$ is equal to zero and only the $q_i$ non-zero elements
 $\lambda_{ij}| \delta_{ij}=1$ in row $i$, 
 collected in the $q_i$-dimensional vector $\lambda_{i\cdot}^\delta$,
  are samples from a Gaussian posterior of dimension $q_i$:
	\[
 \lambda_{i\cdot}^\delta
 \mid   - \, \sim \, N_{q_i} 
 \left(   
 B_{i}^\delta m_i^{\delta},
 B_{i}^\delta \sigma_i^2
 \right),
	\]
    where $B_{i}^\delta$ and $m_i^{\delta}$ are defined as: $B_{i}^\delta =(\kappa I_{q_i} + {\Xi_i^{\delta}}^\top \Xi_i^{\delta})^{-1}$ with $I_{q_i}$ the identity matrix of dimension $q_i$ and  $\Xi_i^\delta$ the matrix defined as $\Xi_i^\delta = (\eta_1, \dots, \eta_T)^\top$ if no element in row $i$ of $\Delta$ is restricted to zero or, similarly, taking $(\eta_1, \dots, \eta_T)^\top$ and deleting all columns $j$ with $\delta_{ij}=0$; $m_i^{\delta} = {\Xi^\delta_i}^\top \tilde{y}_i$ with $\tilde{y}_i = (y_{i1}, \dots, y_{iT})^\top.$
	\item Update the latent factors $\eta_{t}$ for $t=1, \ldots,T$ conditionally on $\Lambda$ from
	\begin{equation*}
	\eta_{t}\mid-  \sim N_{H}\big\{(I_H+
{\Lambda}^\T \Sigma^{-1}{\Lambda})^{-1} {\Lambda}^\T \Sigma^{-1} y_{t},\, (I_H+{\Lambda}^\T \Sigma^{-1} {\Lambda})^{-1}
 \big\}.
	\end{equation*} 
\end{enumerate}

\subsection{Update the number of active columns}
\label{sec:mhK}

The number of active columns $H$ is updated with a random walk Metropolis-Hasting step. We separately detail the case in which the proposal increases or decreases the current value of $H$ separately. {Whenever possible, we will choose between the two updates with probability $0.5$. If  
$\ell_H = m$, then there is no space to add an additional pivot and we propose to decrease $H$  with probability $1$. Similarly, when $H=0$, then we propose to increase $H$  with probability $1$.}

\commentSold{Comments Sylvia:
\begin{itemize}
    \item According to Lemma~3, it is possible that
$\ell_{H} = m-1$ and $\ell_{H+1} = m$ and we have to allow an increase of $H$ in this case, even if it leads to a spurious column.
\item Definition~3 does not rule out the case $H=0$,
even it is very unlikely for realistic values of $m$. Nevertheless we need to allow it, but in this case the only proposal is to increase $H$.
\item Hence, for each proposal, there is a  case, where the proposal ratio is not equal to 1, see below.  
\end{itemize}}

First, consider to propose $H^* = H + 1$, that is add a new column 
${\delta^*_{\cdot H^*}}$
with pivot $\ell_{H^*}^*$ drawn uniformly between  $\ell_{H}+1$ and $m$. Notably 
$\P(H)/\P(H^*) 
= (1-\nu)/\nu$.
Below pivot $\ell_{H^*}^*$, fill the new column sampling the indicators row-wise with probability
\begin{equation}
\P({\delta^*_{iH^*}}
= 1 ) = \frac{B(\alpha \beta +S_{i-1} +1, \beta + F_{i-1} )}{B(\alpha\beta + S_i, \beta + F_i)},
\label{eq:ibplikedelta}
\end{equation}
where $S_i = \sum_{l=\ell_{H^*}^*+1}^{i} 
{\delta_{lH^*}^*}$  and $F_i = \sum_{l=\ell_{H^*}^*+1}^{i} (1-{\delta_{lH^*}^*})$.

Equivalently, we can sample the $\delta_{iH^*}$ below the pivot in block by sampling $d_{H^*}$ from the multinomial distribution with probabilities:
\begin{equation}
\text{pr}(d_{H^*} = d ) = \binom{m-\ell_{H^*}^*}{d} \frac{B(\alpha \beta +d, \beta + {m-\ell_{H^*}^*} -d )}{B(\alpha\beta, \beta)}, 
\label{eq:ibplikedelta2}
\end{equation}
for 
 $d=0, \dots, {m-\ell_{H^*}^*}$ 
and then assign $d_{H^*}$ ones randomly in the entries below  pivot $\ell_{H^*}^*.$
Finally, 
 add  
$\eta_{H^*t}^*$ to the  new factor $ \eta_t^*$
by sampling $\eta_{H^*t}^* \sim N(0,1)$ for $t \in 1, \dots, 
T$.
Let $\Delta^*$ be the new indicator matrix which is identical to $\Delta$ for the first $H$ columns, but has an additional $H^*$-th column {$\delta_{\cdot H^*}^*$} filled with the newly sampled {$\delta^*_{iH^*}$.}

The new column is accepted with probability  $\mhprob = \min (1, {R^+})$ where
\[
R^+ =
\prod_{i: 
\delta^*_{iH^*}
=1}
\prod_{t =1}^T \frac{
	\P(y_{it} \mid 
 {\Delta^*, \eta_t^*)} 
}{
	\P(y_{it} \mid 
{\Delta, \eta_t)}  
} \times 
\frac{
	\P(H^*)\P(
 {\Delta ^*} 
 )\prod_{t =1}^T \P(\eta_t^*)
}{
	\P(H)\P(
 {\Delta} 
 )\prod_{t =1}^T \P(\eta_t)
} \times  \frac{
	1
}{
	\P(
{\delta_{\cdot H^*}^*} 
 ) 
 \prod_{t=1}^T\P(
{\eta_{H^* t}^*} 
 ) 
}  \times p_{+}, 
\]
where the first element of the product is the ratio of the (marginal) likelihoods, the second is the prior ratio, the third is the ratio of the proposals, and $p_{+}=0.5$, if $1\leq H\leq m-1$ or $p_{+}=1$, if $H=0$.

Now consider to propose $H^* = H - 1$, that is drop the last column {$\delta_{\cdot H^*}$} from the factor loading matrix $\Lambda$ and the related entries 
{$\eta_{H ^* t}$}
in the factors $\eta_t$. 
This move is accepted with probability 
$\mhprob = \min (1,{R^-})$ where
\[
R^- = 
\prod_{i: \delta_{iH}=1 
} 
\prod_{t =1}^T \frac{
	\P(y_{it} \mid 
 \Delta^*, \eta_t^*)
}{
	\P(y_{it} \mid 
 \Delta, \eta_t)
} \times  \frac{
	\P(H^*)\P(\Delta ^*)\prod_{t =1}^T \P(\eta_t^*)
}{
	\P(H)\P(
 \Delta
 )
 \prod_{t =1}^T \P(\eta_t)
} \times
\P(
{\delta_{\cdot H^*}} 
) \prod_{t=1}^T\P(
 {\eta_{H^* t}} 
) \times  p_{-}, 
\]
where $p_{-}=0.5$, if $1\leq H\leq m-1$ or $p_{-}=1$, if $H=m$.

Simplifying one gets that, $R^+$ and $R^{-}$ are simply equal to the likelihood ratios of the elements indexed by $i$ and $j=\max\{H, H^*\}$ such that  $\delta_{ij}=1$ multiplied by $\nu/(1-\nu) p_{+}$ or $(1-\nu)/\nu p_{-}$, respectively.

\subsection{Update the pivots' locations}
\label{sec:mhGLT}

Also the pivots' locations are updated through a simple Metropolis-Hastings move which exploits the following theorem: 
\begin{theorem}
\label{thm:uniformpivots}
Let $\Delta$ be a random binary matrix with $m$ rows sampled from a finite \glt prior. Then, the probability of any sequence $\ell_1, \dots, \ell_{H}$ with the same number of pivots $H$ is uniform over all 
$m! \{(m-H)! H!\}^{-1}$ sequences. 
\end{theorem}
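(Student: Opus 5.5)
The plan is to compute the conditional distribution of the pivot sequence given $H=h$ and show that it does not depend on the particular values $l_1<\dots<l_h$. The starting point is Lemma~\ref{lemma:W}. Conditional on $\nu$, the joint probability of observing pivots exactly at $l_1<\dots<l_h\le m$ with no further pivot up to row $m$ is $\nu^h(1-\nu)^{m-h}$. The key feature is that this expression depends on the sequence only through $h$.

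The intuitive reason is the Bernoulli representation of the process. Rows $1,\dots,m$ carry independent Bernoulli($\nu$) jump indicators, and the event $\{\ell_1=l_1,\ldots,\ell_h=l_h,\ell_{h+1}>m\}$ is the event that the indicators equal one exactly at rows $l_1,\dots,l_h$ and zero elsewhere among the first $m$ rows. I would recall this briefly to justify that Lemma~\ref{lemma:W} applies to every admissible increasing sequence in $\{1,\dots,m\}$.

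Next, I would identify the event $\{H=h\}$ as the disjoint union of these events over all $h$-subsets of $\{1,\dots,m\}$. Each subset corresponds to exactly one increasing sequence, so there are $\binom{m}{h}=m!\{(m-H)!H!\}^{-1}$ of them. Summing gives $\P(H=h\mid\nu)=\binom{m}{h}\nu^h(1-\nu)^{m-h}$, which is consistent with $K_m\sim\mathrm{Bin}(m,\nu)$. Therefore
\[
\P(\ell_1=l_1,\ldots,\ell_h=l_h\mid H=h,\nu)=\frac{\nu^h(1-\nu)^{m-h}}{\binom{m}{h}\nu^h(1-\nu)^{m-h}}=\binom{m}{h}^{-1}.
\]

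Finally, I would remove the conditioning on $\nu\sim\mathrm{Beta}(a_\nu,b_\nu)$. Integrate both the joint probability and $\P(H=h\mid\nu)$ against the beta prior. Both integrals are the same constant $B(a_\nu+h,b_\nu+m-h)/B(a_\nu,b_\nu)$, with the second multiplied by $\binom{m}{h}$. Their ratio is again $\binom{m}{h}^{-1}$, which gives uniformity unconditionally.

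I also need to note that the below-pivot indicators $\delta_{ij}$ and the $\tau_j$ do not affect the pivot marginal, since they are generated conditionally on the pivots. I expect no real obstacle. The only delicate point is making the bijection between sequences and subsets precise, including the boundary cases $h=0$ and $l_h=m$.
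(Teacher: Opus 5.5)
Your proposal is correct and follows essentially the same route as the paper: condition on $\nu$, use the i.i.d.\ Bernoulli$(\nu)$ pivot indicators so that every configuration with $h$ pivots has probability $\nu^h(1-\nu)^{m-h}$, and divide by $\P(H=h\mid\nu)=\binom{m}{h}\nu^h(1-\nu)^{m-h}$. Your explicit integration over $\nu\sim\mathrm{Beta}(a_\nu,b_\nu)$ is a small step the paper leaves implicit, since a conditional law that is uniform for every $\nu$ is also uniform after mixing over $\nu$.
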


As already discussed in Section~\ref{sec:GLTstructure}, drawing the sequence of $K_i$s is equivalent to draw a sequence of independent Bernoulli random variables $k_i$ with success probability $\nu$ and then taking the cumulative sum of them.  Conditionally on their total sum $H$,  we consider as proposal in a Metropolis-Hastings step, a random permutation of the sequence of the $k_i$s.  This proposal is uniform over the space of permutations  and  invariant under the prior thanks to Theorem 1. Hence the prior/proposal ratio for the new pivots $(\ell_1^*, \ldots, \ell_H^*)$  is simply equal to 1.

Conditionally on the obtained structure,  in each column $j$ whenever $\ell_j \neq \ell_j^*$, some $\delta_{ij}^*$ between $\ell_j$ and $\ell_j^*$ become structural zeroes, while other $\delta_{ij}^*$ that previously were structural zero are eligible to host a one.
Similarly, the indicator $\delta_{ \ell_j,j}$  corresponding to the old pivot is turned into a structural zero,  if $\ell_j < \ell_j^*$, and is eligible to host a one,  if $\ell_j > \ell_j^*$.

We introduce in what follows a way to propose the new indicators such that 
the prior/proposal ratio is equal to one. This is straightforward when  $\ell_j < \ell_j^*$, since the proposed indicators 
$\delta^*_{i,j}$ for $i=\ell_j, \ell_j +1, \dots, \ell^*_j$ are deterministically equal to zero and all other indicators are held fixed at the corresponding indicators of 
$\delta_{\cdot j}$. 

For $\ell_j > \ell_j^*$, instead, the new indicators $\delta^*_{i,j}$ for $ i =  \ell_j^*+1, \ldots, \ell_j$ are randomly proposed as follows.
Let ${d_{a,j}} =\sum_{i={{\ell_{j}^*+1}}}^{\ell_j} \delta^*_{i,j}$ be the random number of additional non-zero elements in the factor loading matrix, i.e. $d_j^*=d_j+d_{a,j}$. We propose  ${d_{a,j}}  
\in \{0,  \ldots, \ell_j-\ell_{j}^*\}$  with
\begin{eqnarray}\label{condda}
\P(d_{a,j} = d \mid  d_j, \ell_{j}^{*}, \ell_j)= 
  \displaystyle
   \binom{\ell_j - \ell_j^*}{d}
   \frac{  B(\alpha\beta + d_j + d, \beta + m-\ell_j^* - d_j - d)}{  
   B(\alpha\beta + d_j, \beta + m-\ell_j - d_j )}.
\end{eqnarray}
Then, new indicators $\delta^*_{i,j}$ are proposed by assigning ${d_{a,j}}$  ones randomly between  $\ell_{j}^*+1$ and $\ell_j$.

\section{Simulation Study}
\label{sec:simulation}

We assess the finite-sample performance of the proposed \glt{}  prior
and compare it with the unordered generalized lower triangular approach of
\citet{fruehwirth23BA}, implemented through its reversible-jump MCMC sampler.

From a computational perspective, the simulations are conducted using the \texttt{R} package \texttt{glt}. A direct comparison of computing times with the \citet{fruehwirth23BA} approach would, however, be difficult to interpret fairly. The latter is implemented in \texttt{MATLAB} through code developed by the authors, which is not publicly distributed, and our numerical experiments with that implementation were necessarily run on a different computing platform. Differences in programming language, low-level implementation, and hardware would therefore be confounded with differences in the underlying algorithms \citep{kriegel2017black}.

Consistently, we do not report formal wall-clock-time comparisons. As a rough indication of the computational cost of the proposed \glt,  $10{,}000$ iterations required about three minutes for the largest configurations and around 20 seconds for the smallest ones on a MacBook Pro with an Apple M5 processor and 48\,GB of RAM.

For comparability, both samplers were initialized at the true number of factors and the true pivot configuration and were run for $10{,}000$ iterations discarding the first $6{,}000$ as burn-in.

Data were generated from the Gaussian factor model in \eqref{eq:fact-mod}, under four designs chosen to represent distinct sparsity regimes. For each design, we considered $m \in \{20,50,100\}$ and sample sizes $T \in \{100, 200\}$. The true number of factors varied with the scenario and dimension. Each configuration was replicated 20 times. 
Scenario 1 considered a generalized lower triangular structure with irregularly spaced pivots and dense columns below the pivots. Specifically, the first pivot was fixed at 1, while the remaining pivots were randomly generated between 2 and $m/5$. Conditional on the pivot locations, the below-pivot entries were independently set to one with probability $0.95$, yielding configurations characterized by a relatively small number of dense factors. Both the pivot locations and the sparsity patterns were regenerated at each replicate. In this scenario $(m,H) \in \{(20,3), (50,8), (100,10)\}$.
Scenario 2 retained the same generalized lower triangular structure with irregularly spaced pivots, but introduced a substantially higher level of within-column sparsity. As in Scenario 1, the first pivot was fixed at 1 and the remaining pivots were randomly generated between 2 and $m/5$, while the below-pivot entries were independently set to one with probability $0.5$. Both pivot locations and sparsity patterns varied across replicates. Here, $(m,H) \in \{(20,5), (50,10), (100,15)\}$.
Scenario 3 was designed to induce a block dependence structure.  Variables were partitioned into two blocks, and factors were evenly allocated across blocks, while imposing local positive lower triangular substructures within blocks. As in Scenario 2,  $(m,H) \in \{(20,5), (50,10), (100,15)\}$.
Scenario 4 considered a high-dimensional and highly sparse configuration. A generalized lower triangular structure was imposed with regularly spaced pivots. Below each pivot, the first $m/H$ entries of $\Delta$ were fixed to one, while all remaining entries were set to zero, thereby inducing highly localized factors with some overlap. In this case $(m,H) \in \{(20,8), (50,20), (100,40)\}$.
Whenever the below-pivot entries of $\Delta$ were generated randomly, the resulting configurations were adjusted to avoid empty rows and to satisfy the $3579$ counting rule \citep{fruehwirth2023GLT}. Conditional on $\Delta$, the nonzero entries of the loading matrix were generated from factor-specific Gaussian distributions; the idiosyncratic covariance matrix was set equal to the identity. Further details on the generation of $\Delta$ and $\Lambda$, including the treatment of practically vanishing loadings, are reported in the Supplementary Material.  
Figure~\ref{fig:examples} displays representative loading matrices $\Lambda_0$
generated under the four scenarios for $m=20$.

\begin{figure}[t!]
\centering
 \includegraphics[height=.44\textwidth]{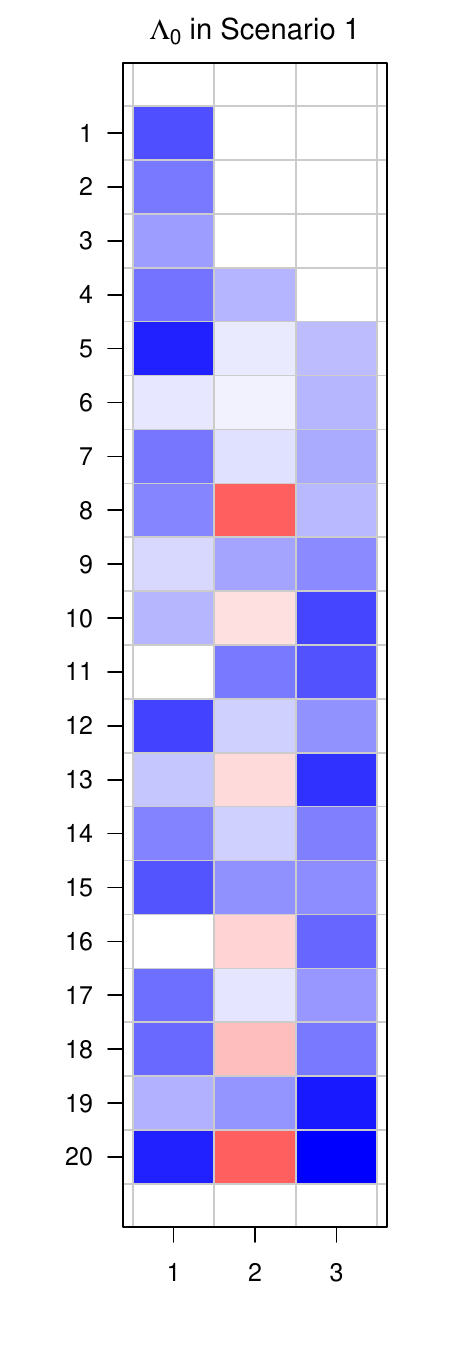}
 \includegraphics[height=.44\textwidth]{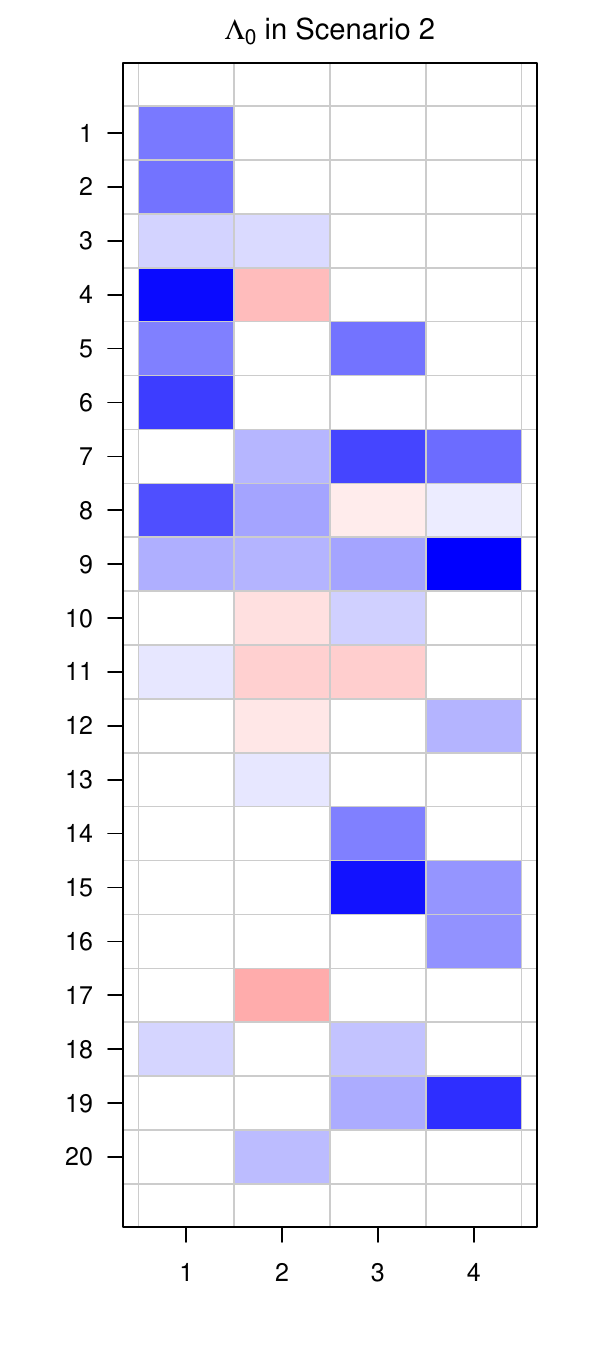}
 \includegraphics[height=.44\textwidth]{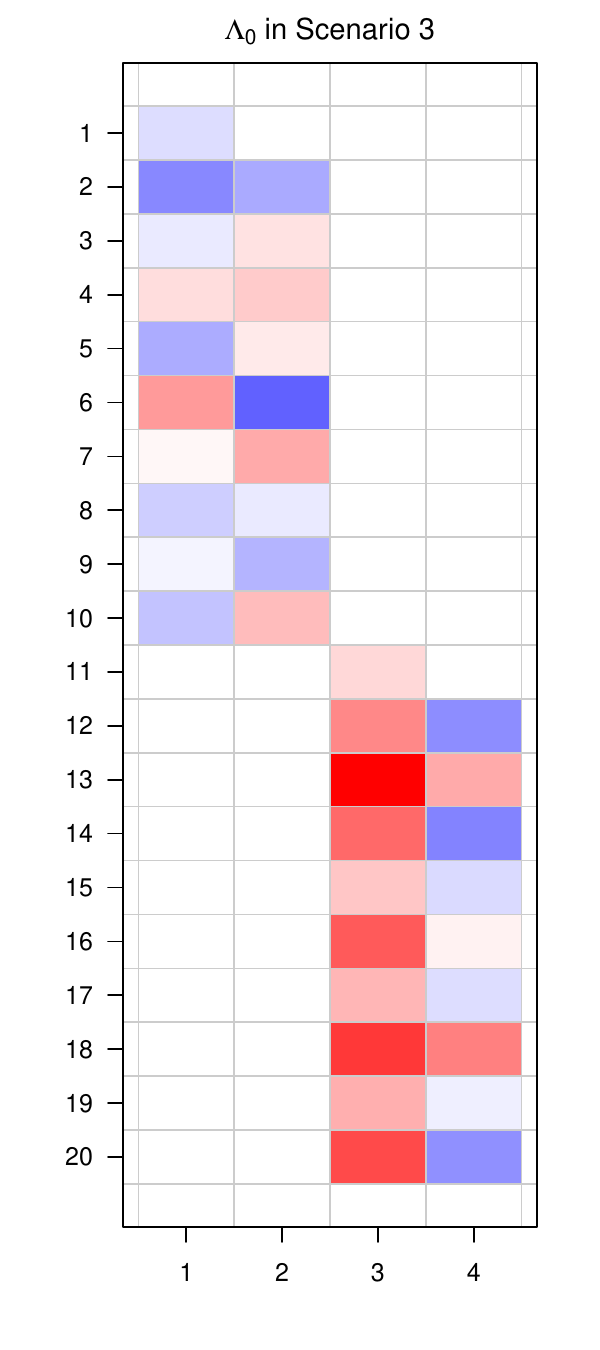}
 \includegraphics[height=.44\textwidth]{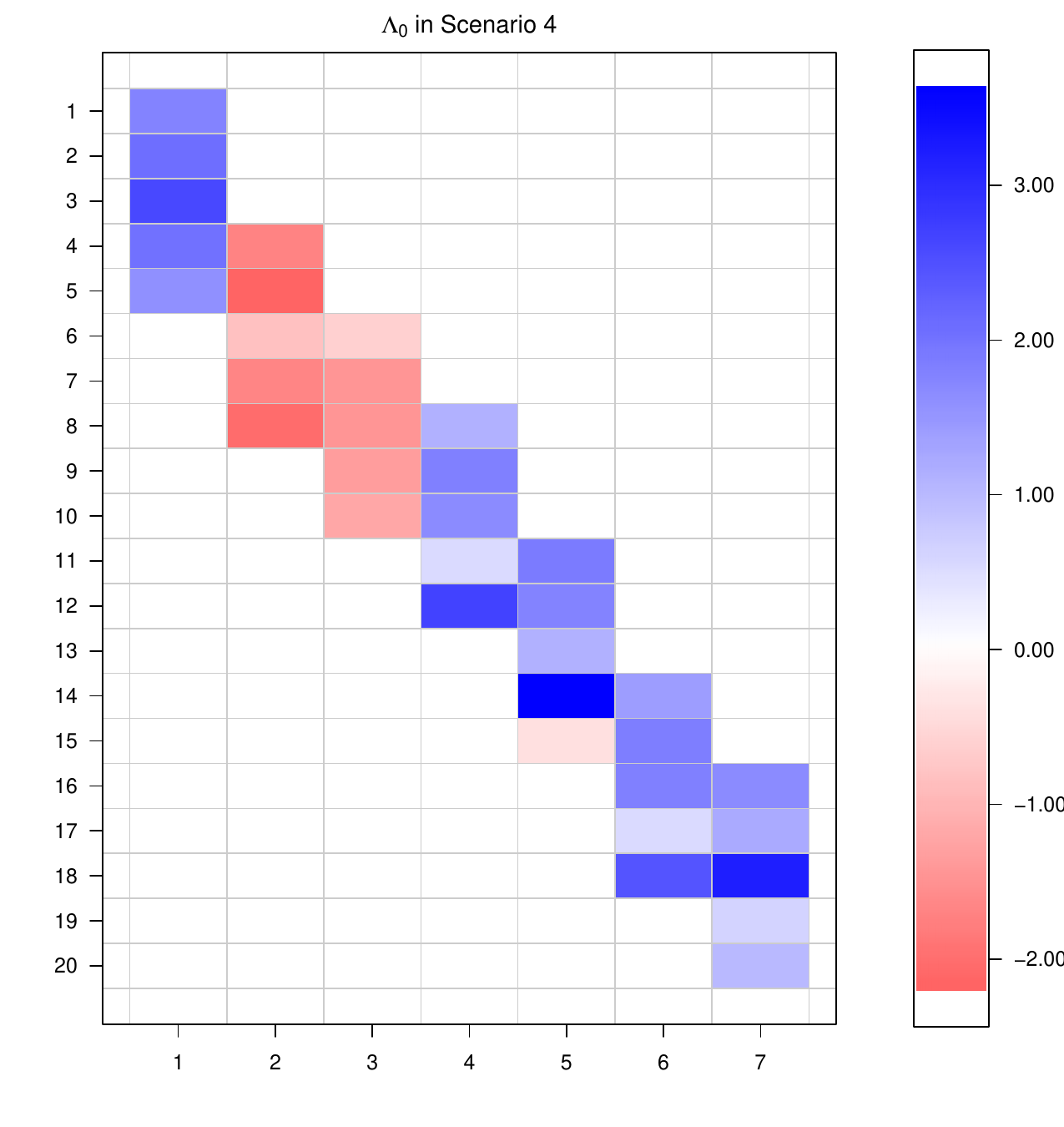}
\caption{Examples of true $\Lambda_0$ matrices under the four simulation scenarios with $m=20$}
\label{fig:examples}
\end{figure}

\begin{figure}[t!]
\centering
 \includegraphics[width=\textwidth]{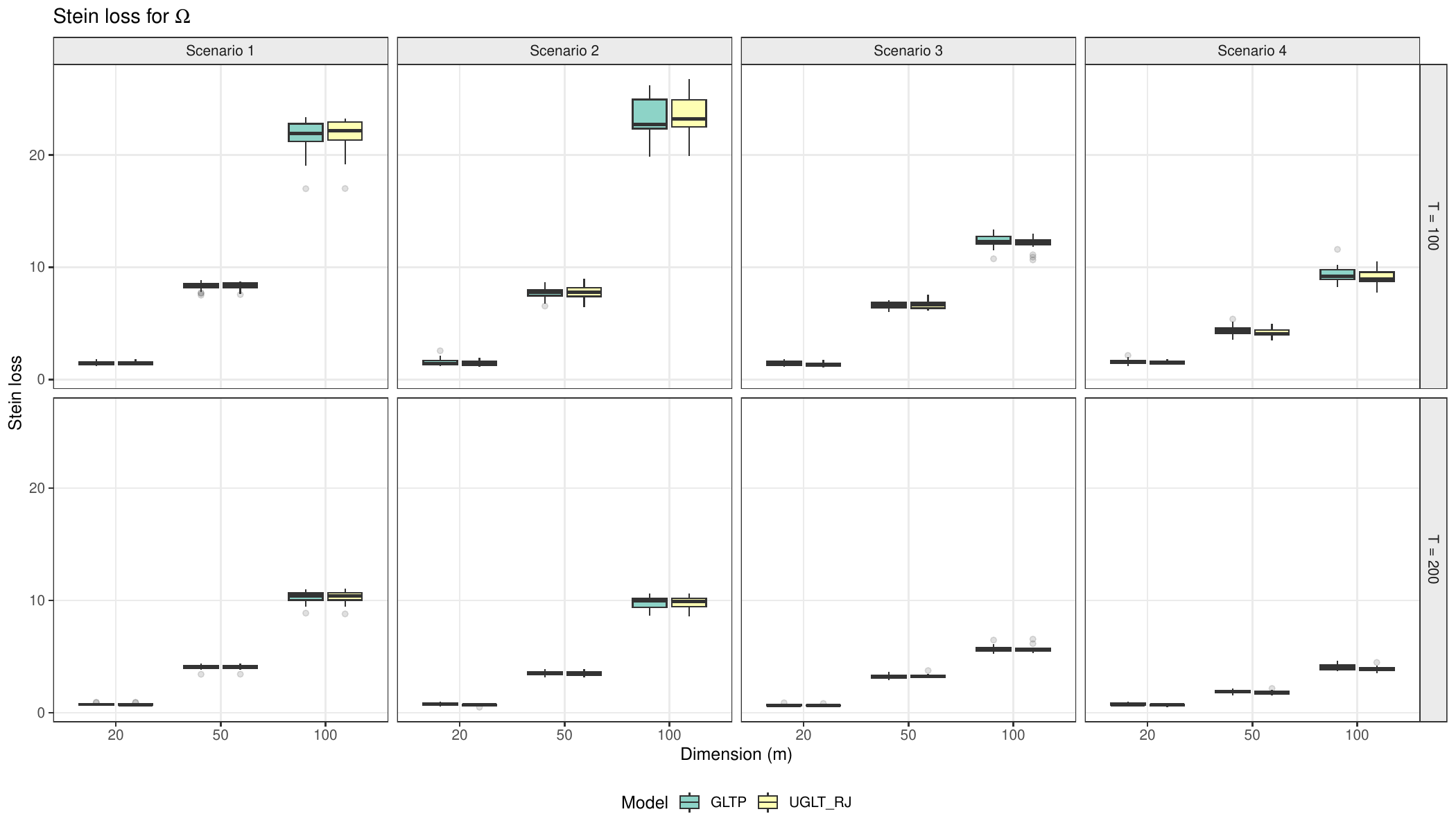}
 
 \includegraphics[width=\textwidth, page=4]{all_simulation_results.pdf}
 \caption{Stein loss and posterior probability assigned to the true number of factors, $\Pr(H=H_0\mid Y)$ across replicates for the proposed \glt{}  prior (GLTP) and the approach proposed in \citet{fruehwirth23BA} (UGLT\_RJ)}
\label{fig:simresults}
\end{figure}

Performance was assessed through four summaries targeting three distinct aspects: covariance estimation, recovery of the latent dimension, and sparsity recovery. Global covariance estimation was evaluated using the posterior expected entropy, or Stein, loss between the covariance matrix $\Omega$ and its true value
$\Omega_0$,
\[
L(\Omega,\Omega_0)
=
\operatorname{tr}(\Omega\Omega_0^{-1})
-
\log\det(\Omega\Omega_0^{-1})
-
m.
\]
For each replicate, the posterior expected loss was approximated by averaging
$L(\Omega,\Omega_0)$ over the admissible posterior draws, namely those satisfying the $3579$ counting rule \citep{fruehwirth2023GLT}.
Recovery of the latent dimension was evaluated through two complementary
summaries: the estimation error of the posterior median,
$\widehat H_{\mathrm{med}}-H_0$, and the posterior probability assigned to the true number $H_0$ of factors, $\Pr(H=H_0\mid Y)$. The former measures point-estimation
accuracy, whereas the latter also accounts for posterior uncertainty about the
latent dimension. Finally, recovery of the sparsity pattern of $\Delta$ was
assessed using the posterior inclusion probabilities
$\Pr(\delta_{jh}=1\mid Y)$ as classification scores for the corresponding true
entries of $\Delta_0$. Their ability to discriminate active loadings from
structural zeros was summarized by the area under the receiver operating
characteristic curve (AUC). Before computing the inclusion probabilities entering the AUC, posterior columns were aligned with the true loading structure; details of the alignment procedure are provided in the Supplementary Material.

Overall, the two approaches yielded remarkably similar results for covariance
and sparsity recovery. Across all scenarios, dimensions, and sample sizes, the Stein losses reported in the upper panel of Figure~\ref{fig:simresults} were essentially indistinguishable for GLTP and UGLT-RJ. As expected, covariance estimation improved with increasing sample size and became more challenging as $m$ increased. A similar picture emerged for sparsity recovery: the AUC values were uniformly high and nearly identical for the two procedures across all configurations; See Table~\ref{tab:auc} for details.

The lower panel of Figure~\ref{fig:simresults} shows a similarly close agreement in latent-dimension recovery for Scenarios~1--3. In these settings, the posterior median of $H$ almost always coincided with the true number $H_0$, and both procedures typically assigned posterior probability close to one to the true number of factors. 
A clearer difference emerged in Scenario~4, which combines a relatively large number of factors with a highly sparse and localized loading structure.
Here,  the approach of \citet{fruehwirth23BA} exhibited increasing posterior uncertainty about $H$ as $m$ increased, particularly for $T=100$, and occasionally underestimated the true latent dimension. In contrast, the proposed \glt retained substantially greater posterior concentration around $H_0$. The difference was attenuated for $T=200$, although it remained visible in the largest-dimensional configuration.

These results are particularly relevant in light of the different prior specifications required by the two approaches. The \glt{}  prior was
implemented throughout using simple uniform priors for the $\tau_j$s and $\nu$ ($\alpha=\beta=a_\nu=b_\nu=1$), whereas for the approach benchmark we adopted the hierarchical hyperprior specification for 
{$\alphaPB$ and $\gammaPB$ advocated by
\citet{fruehwirth23BA}  
with $\alphaPB \sim  \mbox{Ga}  (6, 3(\Hm-2)/\Hm) $ and $\gammaPB \sim 
\mbox{Ga}  (6, 6) $, where $\Hm = \lfloor (m-1)/{2}\rfloor$}. 
Thus, the \glt{}  prior essentially matches the performance of the unordered generalized lower triangular approach of \citet{fruehwirth23BA} in settings where the latter already performs extremely well, while relying on a considerably simpler prior specification, and provides greater robustness in the challenging sparse, high-rank configuration.
Together with the clearer prior elicitation discussed in Section~\ref{sec:GLTIBP}, these results highlight an important practical advantage of the proposed formulation.

\begin{table}
\caption{Median area under the curve for recovery of the sparsity pattern of $\Delta$ across simulation replicates for the \glt prior (GLTP) and the \citet{fruehwirth23BA} approach (UGLT-RJ)}
\centering
\begin{tabular}[t]{lccccc}
\toprule
& & \multicolumn{2}{c}{$T=100$} & \multicolumn{2}{c}{$T=200$} \\
\cmidrule(lr){3-4}\cmidrule(lr){5-6}
Scenario & $m$ & GLTP & UGLT-RJ & GLTP & UGLT-RJ\\
\midrule
Scenario 1 & 20 & 0.967 & 0.965 & 0.980 & 0.978\\
 & 50 & 0.920 & 0.920 & 0.947 & 0.949\\
 & 100 & 0.906 & 0.907 & 0.917 & 0.920\\
 \addlinespace
Scenario 2 & 20 & 0.974 & 0.970 & 0.985 & 0.985\\
 & 50 & 0.969 & 0.963 & 0.992 & 0.991\\
 & 100 & 0.974 & 0.969 & 0.996 & 0.996\\
 \addlinespace
Scenario 3 & 20 & 0.975 & 0.979 & 0.994 & 0.993\\
 & 50 & 0.985 & 0.987 & 0.992 & 0.994\\
 & 100 & 0.988 & 0.990 & 0.993 & 0.995\\
 \addlinespace
Scenario 4 & 20 & 0.990 & 0.986 & 0.999 & 0.999\\
 & 50 & 0.995 & 0.994 & 0.999 & 0.999\\
 & 100 & 0.991 & 0.989 & 0.997 & 0.997\\
\bottomrule
\end{tabular}
\label{tab:auc}
\end{table}

\section{Application to personality questionnaire data}
\label{sec:application}

We illustrate the proposed methodology using the Big Five Personality Test data,
obtained from the Open Psychometrics project \citep{Tunguz2019}.
The data contain responses to 50 personality items measured on a five-point Likert scale. The items were designed to measure five broad personality domains, namely extraversion, agreeableness, conscientiousness, neuroticism and openness,
with ten items associated with each domain \citep{goldberg1990}. 
This known organization provides a useful external benchmark for assessing the latent structure recovered by the model, without imposing the five-domain representation a priori.

The same data have been considered by \citet{ohn2022posterior} in the context of Bayesian estimation of factor dimensionality. Their analysis primarily used the canonical five-domain organization as a benchmark for comparing estimates of the number of factors and the associated loading matrices. Consistently with its illustrative purpose, however, that analysis relied on a number of simplifying choices, including treating the Likert responses as Gaussian and disregarding the available covariate information. Our aim is broader. We seek not only to learn the latent dimensionality, but also to characterize the sparse dependence structure among questionnaire items and the associated posterior uncertainty. To this end, we explicitly accommodate the discrete nature of the responses and exploit the available covariates to model systematic variation in the  mean, thereby separating response-process heterogeneity from the remaining latent dependence.

We restrict attention to respondents with complete answers to all 50 items and retain only records with a single response per IP address.
From the resulting sample, we randomly select 
$T=200$  individuals. The use of a moderate sample size is deliberate. Since the original database contains a very large number of respondents, an analysis based on the complete sample would largely suppress posterior uncertainty about low-dimensional features and would turn the example primarily into an exercise
in computational scalability. 
In contrast, $T=200$ provides a regime in which uncertainty about both factor dimensionality and sparse loading structure remains substantive. This choice also facilitates comparison with \citet{ohn2022posterior}, who considered subsamples of the same size. To investigate the sensitivity of the results to the particular subsample, we additionally repeat the analysis for 20 independently selected samples of the same size. Details are reported in the Supplementary Material.

We now carefully describe the modeling strategy for this dataset. For the  $t$-th subject ($t=1,\ldots,T$ ), let $y_t=(y_{1t},\ldots,y_{50,t})^{\mathrm T},$ 
denote the observed questionnaire responses. 
To deal with ordinal nature of the response, we embed the factor model within the simultaneous transformation and rounding framework of \citet{kowal2020simultaneous}. For a related approach in the context of single-cell RNA sequencing data analysis see \citet{canale2026structured}. 
Specifically, each observed response 
$y_{jt}$ is linked to a latent continuous variable  $z_{jt}$ through a rounding mechanism,
\begin{equation}
\label{eq:star1}
y_{jt}
=\ell
\quad\Longleftrightarrow\quad
\exp\{
z_{jt}
\} \in\mathcal A_\ell,
\qquad \ell=1,\ldots,5,
\end{equation}
where
$\mathcal A_1=(0,1),$
$\mathcal A_\ell=[\ell-1,\ell)$ for $\ell=2,3,4$, and 
$\mathcal A_5=[4,\infty).$
Thus, the rounding sets partition the positive real line at the cut points
$1,2,3,4$, while the logarithmic transformation maps the underlying continuous
response to the real line. See \citet{kowal2020simultaneous} for 
details and generalizations. 

In addition to the questionnaire responses, the data provide item-specific response times and information on the time spent on different stages of the questionnaire that we used as covariates in a linear regression for the mean of 
$z_{jt}$. 
Specifically, we use the mean response time within each of the five blocks of ten questions, together with the time spent on the introductory and final pages. Let 
$x_t$ 
denote this covariate vector after standardization and including an intercept. 
The latent variables 
$z_{jt}$ 
are then modeled via 
	\begin{equation}
    z_{jt} = x_t^T \beta_j + \varepsilon_{jt}, 
    \label{eq:star2}
\end{equation}
where we model the latent residual Gaussian noise 
$\varepsilon_t=(\varepsilon_{1t}, \ldots, \varepsilon_{50,t})^T$ as $\varepsilon_t = \Lambda \eta_t + \epsilon_t$ assuming for $\Lambda$, $\eta_t$, and $\epsilon_t$
the same specification as discussed in Sections~\ref{sec:intro}--\ref{sec:GLTstructure}. Additional vague Gaussian priors for 
the regression coefficients  $\beta_j$ 
complete the model specification.

\begin{figure}[t!]
\centering
 \includegraphics[height=.6\textwidth]{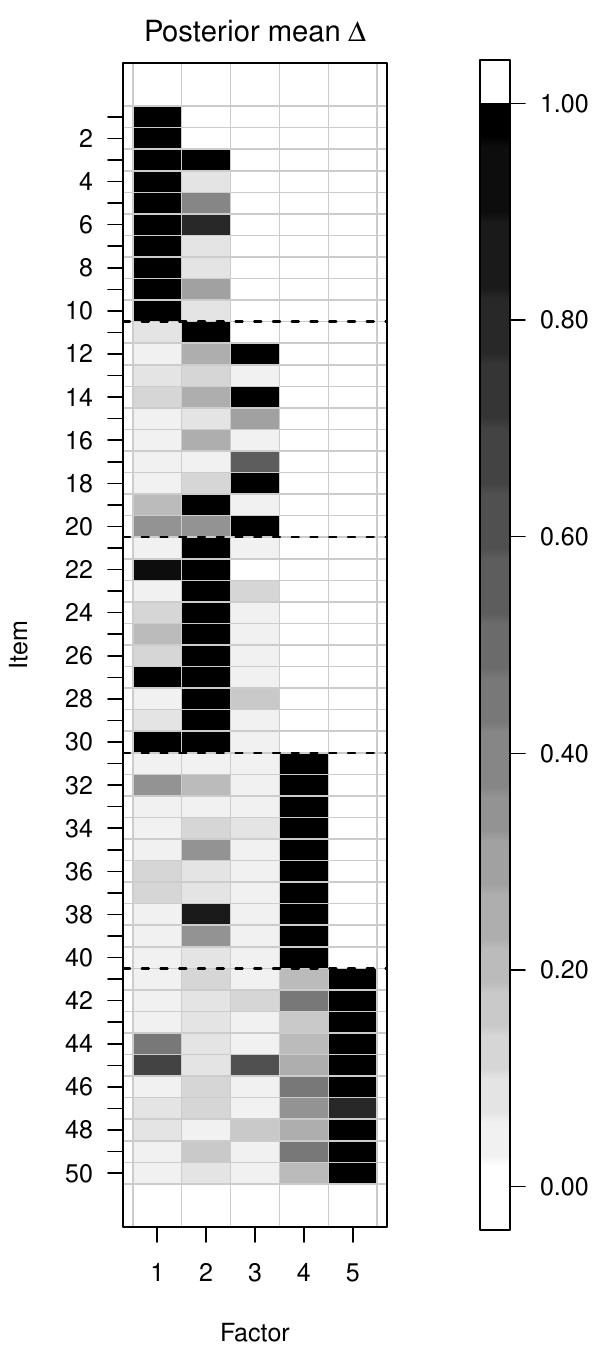}
 \includegraphics[height=.6\textwidth]{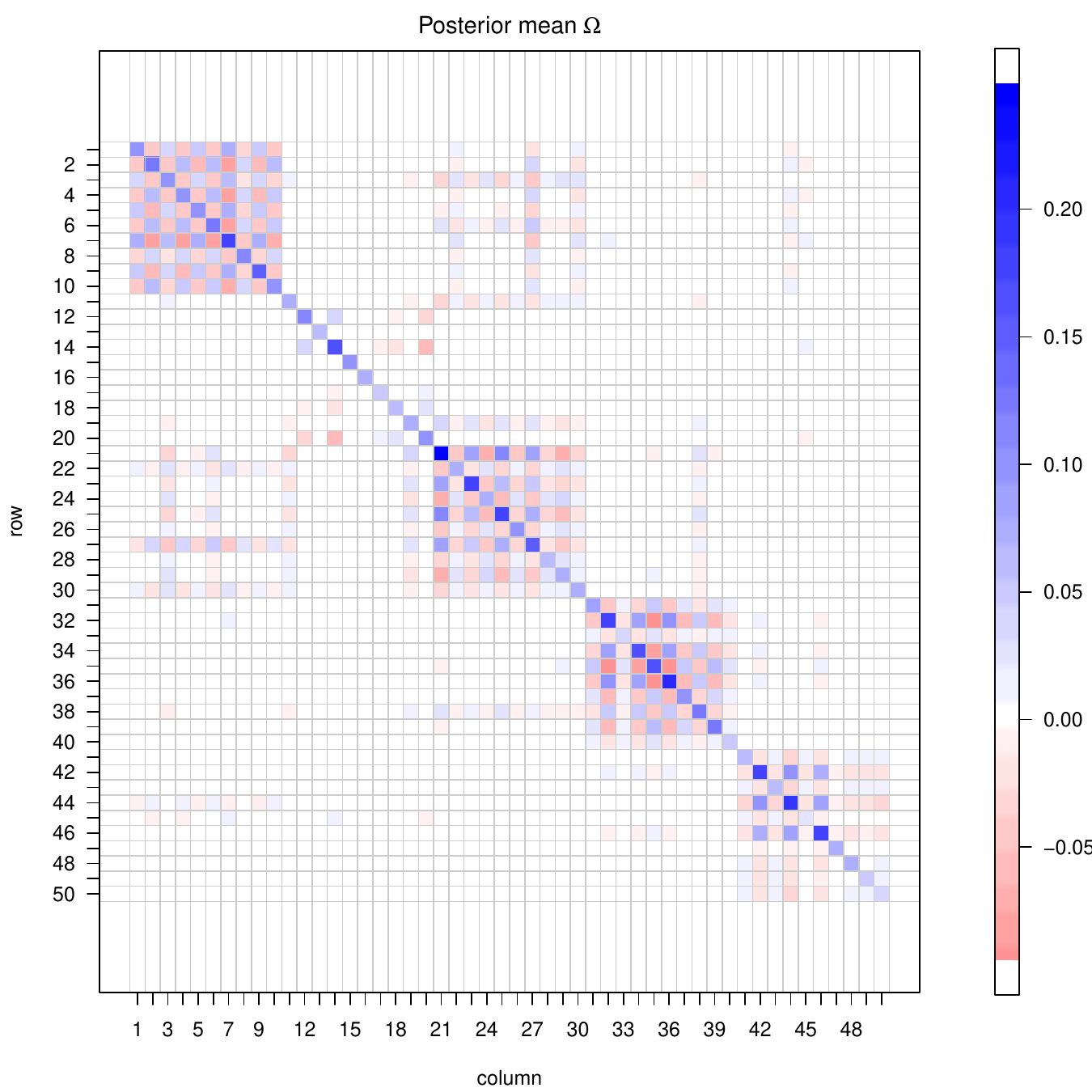}
\caption{Posterior mean $\Delta$ (left) and $\Omega$ (right) for the Big Five personality data. Dashed orizontal lines on the left plot divide the 50 items into the 5 blocks of questions.}
\label{fig:application}
\end{figure}

The posterior mean and mode of $H$ are $5.01$ and $5$, respectively, while the posterior probability of the canonical five-factor representation is $0.99$. Thus, the data provide strong support for a five-dimensional latent representation.

The \glt\ formulation additionally permits direct inference on the item--factor incidence structure. The left panel of Figure~\ref{fig:application} reports the posterior inclusion probabilities, obtained as entry-wise posterior means of the draws of $\Delta$ satisfying the identification criterion \citep{fruehwirth2023GLT} for the modal pivot configuration. The resulting pattern largely recovers the expected five-domain organization, while also revealing departures from an idealized simple structure. The first factor is almost exclusively associated with the first block of ten items, whereas the second factor is predominantly associated with the third block. The third factor is considerably sparser and is supported only by a small subset of items, mostly within the second block. The fourth and fifth factors are clearly associated with the fourth and fifth item groups, respectively. Hence, although the posterior strongly favours five latent dimensions, the evidence supporting the individual domains is heterogeneous: four factors display a pronounced block structure, whereas the second questionnaire domain is represented by a substantially sparser loading pattern.

The right panel of Figure~\ref{fig:application} reports the posterior mean of the residual covariance matrix $\Omega$. After accounting for the covariate-dependent latent mean and the factor component, the residual dependence is strongly concentrated within the first, third, fourth and fifth item blocks. In contrast, the second block exhibits comparatively weak residual association. This feature is particularly informative when compared with the empirical covariance matrix of the original Likert responses, reported in the Supplementary Materials, where the second block displays a noticeably richer pattern of positive and negative associations. The attenuation of these associations in the posterior residual covariance suggests that a substantial part of the raw dependence in this block is explained by the  representation induced by \eqref{eq:star1}-\eqref{eq:star2}, rather than persisting as residual dependence.

Further posterior characteristics including posterior mean and credible intervals for the factor sparsity and posterior mean inclusion probability for each factor and block of items are reported and discussed in the Supplementary Materials.

\section{Discussion}
\label{sec:discussion}

We have introduced a \glt\  prior for identifiable sparse Bayesian factor models with an unknown number of factors. The proposed construction separates the prior control of global dimensionality from that of within-factor sparsity, allowing these two features to be elicited through distinct and interpretable hyperparameters. At the same time, the \glt\ representation directly enforces an identifiable loading structure and avoids the need for reversible-jump schemes over competing pivot configurations. Posterior computation can be carried out through a relatively simple Markov chain Monte Carlo algorithm based on Gibbs updates and Metropolis--Hastings steps. The resulting framework therefore combines interpretability, flexibility and computational tractability within a single prior specification.

The simulation experiments show that the proposed prior is able to recover both the effective factor dimensionality and sparse loading patterns across a range of configurations. The application to the Big Five Personality Test further illustrates the inferential advantages of the approach and its ease of generalization.  The posterior strongly supports a five-dimensional representation, while revealing heterogeneous sparsity across the inferred factors and departures from an idealized block structure.

Several extensions appear promising. In the present formulation, the pivot indicators
are conditionally independent Bernoulli random variables with common probability $\nu$. When variable-specific auxiliary information 
$x_i$ 
is available, a natural generalization is to let the pivot probabilities depend on meta-covariates, for example letting $\operatorname{logit}(\nu_i)=\gamma_0+x_i^\top\gamma$.
This would allow prior information to influence not only the expected number of pivots, but also their locations. Such an extension is closely related to \citet{canaleschiavonstolf2026}, where auxiliary information is used to guide sparsity and pivot configurations. More generally, similar meta-regressions could be introduced for the post-pivot inclusion probabilities, yielding a fully structured \glt\  prior.


\section*{Acknowledgments}
Antonio Canale acknowledges support from the  Royal Society Wolfson Visiting Fellowship.


\appendix

\section{Appendix}

\renewcommand{\thesection}{\Alph{section}}
\renewcommand{\thetable}{\Alph{section}.\arabic{table}}
\renewcommand{\thefigure}{\Alph{section}.\arabic{figure}}
\renewcommand{\theequation}{\Alph{section}.\arabic{equation}}

\subsection{Computational details}

We derive some useful quantities.
The conditional joint probability of $y_t$ given $\eta_t$ and $\Lambda$, i.e. the likelihood given the latent factors and the factor loadings, 
is
\begin{align}
    \P(y_1, \dots y_T \mid \eta_1, \dots,\eta_T, \Lambda, \Sigma) \propto \exp\left\{
    -\frac{1}{2}\sum_{t=1}^T \sum_{i=1}^m   \frac{\left[ y_{it} - \lambda_{i\cdot } \eta_t \right]^2}{\sigma_i^2}
    \right\}.
\end{align}
{where $\lambda_{i\cdot }$ is the $i$th row of $\Lambda$.}
 Isolating the $i$-th entry we get 
\begin{align}
    \P({y_{i1}, \dots y_{iT}} 
    \mid \eta_1, \dots,\eta_T, \Lambda, \Sigma, \Delta) \propto 
    \exp\left\{
    -\frac{1}{2\sigma^2_i}  \left[ y_{i\cdot } -     
    {\eta_i^\delta} {\lambda_{i\cdot}^\delta}
    \right]^\top \left[ y_{i\cdot } - 
    {\eta_i^\delta} {\lambda_{i\cdot}^\delta}
    \right] 
    \right\}
\end{align}
which is the same of a regression model with
{outcome vector $y_{i\cdot}= (y_{i1}, \dots, y_{iT})^\top$ and  regression coefficients equal to the 
$q_i$-dimensional vector $\lambda_{i\cdot}^\delta$ containing the non-zero factor loadings  in row 
$\lambda_{i\cdot}$. The 
covariate matrix $\eta_i^\delta$ is derived from the corresponding latent factors with the $q_i$ columns being equal to  all factors $(\eta_{j1}, \ldots, \eta_{jT})^\top$,
 $j=1, \ldots, H$, for which $\delta_{ij}=1$.} 

Consistent with this, the marginal likelihood obtained by marginalizing out 
the {non-zero elements $\lambda_{i\cdot}^\delta$ of} $\lambda_{i\cdot}$ and $\sigma_i^2$ under prior \eqref{eq:priorlambdasigma} is
\begin{align} \label{app:marlik}
    \P({y_{i1}, \dots , y_{iT}}
    \mid \eta_1, \dots,\eta_T) = 
    \frac{\Gamma(a_\sigma +T/2)}{(2\pi)^{T/2}\Gamma(a_\sigma)} 
    \frac{b_\sigma^{a_\sigma}}{(b_\sigma + Q_i)^{a_\sigma +T/2}}
   \frac{\kappa^{-q_i/2}
}{
    | 
     {\kappa^{-1} I_{q_i} + (\eta_i^\delta) ^\top \eta_i^\delta }
|^{1/2}},
\end{align}
where $Q_i = 1/2 \left\{  
{y_{i\cdot} ^\top y_{i\cdot}
- y_{i\cdot} ^\top
\eta_i^\delta 
[\kappa^{-1} I_{q_i} + (\eta_i^\delta) ^\top \eta_i^\delta ]^{-1}
(\eta_i^\delta) ^\top
y_{i \cdot}}
\right\}.$

\subsection{Lemmas and proofs} \label{Proofs}

\begin{proof}[Proof of Lemma \ref{lem:priorsparsity}]
Define
\[
N_j = \sum_{i=\ell_j+1}^m \delta_{ij}, 
\quad {j=1, \ldots, H,}
\qquad N_m=H + \sum_{j=1}^H N_j.
\]
Moreover, let $K_i^*$ be a binary variable which indicates
whether row $i$ is a pivot row, 
so that $\sum_{i=1}^m K_i^*=H$.
We compute expectations in three nested steps.
 We first integrate out the $\delta_{ij}$ given the pivot locations $\{\ell_j\}_{j=1}^H$,  $\tau=(\tau_1,\ldots,\tau_H)$, and $H=h$. For each $j$ we have
\[
\mathbb{E} \, 
\! \big[N_j \mid \ell_j,\tau_j,H=h\big]
= \tau_j(m-\ell_j).
\]
Therefore,
\[
\mathbb{E}\left[N_m\,\middle|\,\{\ell_j\},\{\tau_j\},H=h\right]
= h +  \sum_{j=1}^h \tau_j(m-\ell_j).
\]
Then, we integrate out $\tau$ given the pivot sequence and $H$. Since $\mathbb{E}[\tau_j \mid {\alpha}]=\alpha/(\alpha+1)$ we have
\begin{equation*} 
\mathbb{E}\left[N_m\,\middle|  \,\{\ell_j\}, {\alpha}, H=h\right]
= h + \frac{\alpha}{\alpha+1} \sum_{j=1}^h (m-\ell_j).    
\end{equation*}
Finally, we have to integrate out the pivot locations. Before doing that, we note the identity
\begin{equation} \label{ljKi}
\sum_{j=1}^H (m-\ell_j) = \sum_{i=1}^m (m-i)K_i^*.
\end{equation}
Notably, the left hand side of equation above is a random quantity depending both on the random $H$ and random pivot locations. The right hand side, instead, is a function of the sequence $K_i^*$ only. We henceforth focus on the latter. 
By exchangeability of pivot locations given $H=h$, each row has the same conditional probability to be a pivot, hence $\mathbb{E}[K_i^*\mid H=h]=h/m$ for every $i$, and thus
\[
\mathbb{E}\!\left[\frac{K_i^*}{H}\,\middle|\, H=h\right] = 
\frac{1}{m}.
\]
If follows,
\begin{equation} \label{resultA}
    \mathbb{E}\left[\frac{1}{H}\sum_{i=1}^m (m-i)K_i^* \,\middle|\,H=h\right]
= \sum_{i=1}^m (m-i)\,\mathbb{E}\!\left[\frac{K_i^*}{H}\,\middle|\,H=h\right]
= \sum_{i=1}^m (m-i)\cdot\frac{1}{m} = \frac{m-1}{2}.
\end{equation}
Substituting back gives, for any $h\ge 1$,
\[
\mathbb{E}\!\left[\frac{N_m}{mH}\,\middle|\, {\alpha}, H=h\right]
= \frac{1}{m} + \frac{\alpha}{\alpha+1}\,\frac{m-1}{2m}.
\]
\end{proof}

\begin{proof}[Proof of Lemma \ref{lem:column_sparsity_var}]
Use the standard Beta--Binomial variance formula with $a=\alpha\beta$, $b=\beta$ and
$n=A_j$,
\[
\mathrm{Var}(N_j\mid A_j)
=
A_j\frac{\alpha(\alpha\beta+\beta+A_j)}{(\alpha+1)^2(\alpha\beta+\beta+1)}.
\]
Divide by $A_j^2$ to get $\mathrm{Var}(R_j)=\mathrm{Var}(N_j)/A_j^2$ and obtain 

\[
\mathrm{Var}(R_j\mid A_j)
=
\frac{\alpha}{A_j(\alpha+1)^2}\,\frac{(\alpha+1)\beta + A_j}{(\alpha+1)\beta+1}.
\]
Noting $(\alpha+1)\beta + A_j=( (\alpha+1)\beta+1) + (A_j-1)$ and expanding yields the
claimed decomposition
\[
\mathrm{Var}(R_j\mid A_j)
=
\frac{\alpha}{A_j(\alpha+1)^2}
+
\frac{\alpha(A_j-1)}{A_j(\alpha+1)^2\{(\alpha+1)\beta+1\}}.
\]
\end{proof}

\begin{proof}[Proof of Theorem~\ref{thm:uniformpivots}]
Let $K_i^*$ be a binary variable which indicates
whether row $i$ is a pivot row. By construction, conditionally on \(\nu\), the \(K_i^*\)'s are i.i.d. Bernoulli\((\nu)\). Hence any configuration with exactly \(H\) pivots has probability $\nu^H(1-\nu)^{m-H}.$
Therefore, conditionally on \(H\), all \(\binom{m}{H}\) possible pivot configurations are equally likely, so that
\[
	\P(\ell_1=l_1, \ldots, \ell_{H}=l_H \mid  H )
=
\binom{m}{H}^{-1}
=
\frac{H!(m-H)!}{m!}.
\]
\end{proof}

\subsection{Additional results on the \citet{fruehwirth23BA} prior}
\label{UGLTprior}

The prior proposed by \citet{fruehwirth2023GLT}
operates with an indicator  matrix 
$\Delta$ of fixed dimension $m \times \Hm$, where the number of columns $\Hm\leq \lfloor (m-1)/2\rfloor$ is a finite upper bound for the factor dimension $H$.
 It is  inspired by a  finite version of two-parameter beta prior by \citet{gha-etal:bay}, where in each column $j$ a priori 
$ \P(\delta_{ij} = 1 |\tau_{j} ) = \tau_{j} $ and  $\tau_j$ are independent and identically distributed beta random variable, $\tau_j \sim \mbox{Beta}(\alphaPB/\Hm \gammaPB, \gammaPB)$ with hyperparameters $\alphaPB$ and $\gammaPB$.   
 The latter modifies the prior of \citet{gha-etal:bay} so that all realizations from $\P(\Delta \mid \alphaPB, \gammaPB, \Hm)$ have an unordered generalized lower triangular structure. The pivots $\ell_1, \ldots, \ell_H$ of the $H$ non-zero columns occupy distinct rows without any ordering constraint and have a uniform prior distribution in the sense of Theorem~\ref{thm:uniformpivots}.
 Thus, unlike in our approach, the same hyperparameters $\alphaPB$ and $\gammaPB$ in the prior $\tau_j \sim \mbox{Beta}(\alphaPB \gammaPB/\Hm, \gammaPB)$ govern both row and column sparsity in $\Delta$.

 Deriving a feature allocation representation
 of the prior used in \citet{fruehwirth23BA}  as in Section~\ref{sec:GLTIBP}
is a helpful in understanding the impact of $\alphaPB$ and $\gammaPB$ on row sparsity. 
The sparsity pattern below the pivots is controlled
by the conditional prior $ \P(\delta_{ij} = 1 \mid \ell_j, \tau_j) = \tau_{j} $ for $i>\ell_j$.
As before, let $K_i$ denote the number of non-zero columns in the first $i$ rows of $\Delta$, and let $d_{ij} = \sum_{l=\ell_j+1}^i \delta_{lj}$ count the customers who have chosen dish $j$ after its first selection. Under the common culinary metaphor, customer $i+1$ chooses dish $j$ among the $K_i$ previously selected dishes with probability $\tau_j$, whose conditional distribution is
\[
\tau_j \mid \ell_j, d_{ij}, \alphaPB, \gammaPB, \Hm  \, \sim\,   \mbox{Beta}\left(\alphaPB/\Hm \gammaPB + d_{ij}, \gammaPB + i-\ell_j -d_{ij}\right).
\]
 Its expectation is  
 \[\frac{\alphaPB/\Hm \gammaPB + d_{ij}}{ (\alphaPB/\Hm +1) \gammaPB + i-\ell_j}.
 \]
 We find that for a given $\gammaPB$, increasing $\alphaPB$
adds to the popularity of dish $j$, while increasing $\gammaPB$ for a given $ \alphaPB$ has a strong  averse effect. 
As  $\Hm$ increases, the expected probability
for dish $j$ converges to $d_{ij}/  (\gammaPB + i-\ell_j)^{-1}$, hence $\gammaPB$ can be regarded
as a prior  penalty in terms of imaginary times  dish $j$ was not chosen by  $\gammaPB$  costumers.

In addition,  customer $i+1$  has a positive probability to choose a  single new dish among the remaining $\Hm-K_i$ dishes as long as $K_i< \Hm$. 
 This probability has no simple expression  due to the subtle impact of the generalized lower triangular constraint imposed on the  rows
of $\Delta$ for any given $K_i< \Hm$.  
The same constraint allows each customer to choose at most one new dish, in sharp contrast to the unconstrained two-parameter beta prior, under which customer $i+1$ may choose any number of the remaining dishes.

 
 To understand how the hyperparameter 
 $\alphaPB$ and $\gammaPB$ impacts column sparsity, 
 it is helpful to consider a  representation where we move  from one dish to next, increasing $j=1, \ldots, \Hm$ sequentially. 
 Under the two-parameter beta prior, there is a positive probability that column $j$ is a zero column, meaning that  none of the $m$ customers chooses dish $j$, or, equivalently,   $d_j=\sum_{i=1}^m \delta_{ij}=0$.
 Let $\Lq{j}$ denote the number of dishes that where chosen by at least one costumer among the $j$ dishes offered sofar, with $\Lq{0}=0$ by definition.
Then,  given $\Lq{j-1}= \jr-1$,
 \begin{equation} \label{djzero}
\pq{\jr}=\P(d_{j}=0| \Lq{j-1}=\jr-1, \alphaPB, \gammaPB,
 \Hm ) = \frac{B(\alphaPB/\Hm \gammaPB, \gammaPB + m -\jr+1)}{B(\alphaPB/\Hm \gammaPB, \gammaPB)}. 
 \end{equation}
A proof is given below. 

Based on the sequence $\pq{\jr}, \jr=1, \ldots,h+1$, the prior probability  $\P(H=h|\alphaPB, \gammaPB,
 \Hm)$ is given by:
 \begin{eqnarray}
     \P(H=h|\alphaPB, \gammaPB,
 \Hm)=   \prod_{\jr=1}^h \pq{\jr}^{\tilde{W}_\jr} (1-\pq{\jr}) \, (\pq{h+1})^{m-h- \sum_{\jr=1}^h\tilde{W}_\jr} ,
 \end{eqnarray}
where $\tilde{W}_\jr$ is the waiting time between jump $\jr-1$ and jump $\jr$ of the process $ \Lq{j} $. 
No closed-form solution is available for the expectation 
$\mathbb{E} (H|\alphaPB, \gammaPB,
 \Hm)$, but column sparsity obviously depends on both hyperparameters  
$\alphaPB$ and $\gammaPB$. 


Under the assumption  $\Hm = \lfloor \frac{m-1}{2}\rfloor$  suggested by  \citet{fruehwirth2023GLT},  
it possible to derive the 
following approximation of the prior expectation of $H$
for large $m$:
 \begin{eqnarray} \label{qapprox}
     \mathbb{E} (H|\alphaPB, \gammaPB,
 \Hm) \approx  \alphaPB \gammaPB \log m - \alphaPB \gammaPB \psi(\gammaPB),
 \end{eqnarray}
 where $\psi(x)=\Gamma'(x)/\Gamma(x)$ is the digamma function. A proof is given below. 
 
Interestingly, the first term of \eqref{qapprox} coincides with the limiting expectation $\mathbb{E}(H \mid \alphaPB, \gammaPB)$ under the two-parameter beta prior as $\Hm \to \infty$. The second term accounts for the reduction in the expected  factor dimension  induced by the generalized lower triangular constraint, which allows to each customer to choose at most one new dish.

Finally, the associated prior expected non-sparsity equals 
\begin{eqnarray} \label{lemma4}
 \mathbb{E}\left[\frac{N_m}{mH} \middle| \alphaPB, \Hm\right] = \frac{1}{m} + \frac{\alphaPB}{\alphaPB + \Hm} \frac{m-1}{2m}.   
\end{eqnarray}
Its proof is reported below

\begin{proof}[Proof of Eq.~\eqref{djzero}]
For the  first dish ($j=1$), the probability of a zero column is equal to
 \begin{equation*}
\pq{1}=\P(d_1=0|  \alphaPB, \gammaPB,
 \Hm ) = \frac{B(\alphaPB/\Hm \gammaPB, \gammaPB + m )}{B(\alphaPB/\Hm \gammaPB, \gammaPB)}.
\end{equation*}
With probability $1-\pq{1}$,
the first dish will be chosen  at least once
and $\Lq{1}= \Lq{0}+1$ increases by 1. 
As long as $\Lq{j}=0$ remains unchanged, 
 the probability that any subsequent dish $j +1$ is never chosen is equal to  $\pq{1}$. With 
probability $1-\pq{1}$,
dish $j$ will be chosen  at least once
and $\Lq{j}= \Lq{j-1}+1$ increases by 1. 
In case $\Lq{j}= \Lq{j-1}+1$ increases from 0 to 1,  the probability $\pq{2}$ that the subsequent  dish $j+1$ will not be chosen changes due to the generalized lower triangular constraint imposed on column $j$. For any  possible pivot $\ell_1$ in column $j$, by definition $\delta_{\ell_1,j+1}=0$ is a structural zero and only the $m-1$ indicators   $\delta_{i,j+1}$ where $i \neq l_1$ are subject 
to variable selection. 
The probability  $\pq{2}$ remains  constant until $\Lq{j}$ increases again. This process is repeated until $j=\Hm$ reaches the upper bound  and  $\Lq{m}$ defines the number of non-zero columns  as $H=\Lq{m}$.
Given $\Lq{j-1}= \jr-1$,
 \begin{equation*}
\pq{\jr}=\P(d_{j}=0| \Lq{j-1}=\jr-1, \alphaPB, \gammaPB,
 \Hm ) = \frac{B(\alphaPB/\Hm \gammaPB, \gammaPB + m -\jr+1)}{B(\alphaPB/\Hm \gammaPB, \gammaPB)},
\end{equation*}
since  all possible pivots 
$\ell_1 < \ldots < \ell_{\jr-1}$ in the current 
$\jr-1$  non-zero columns by definition impose the structural zeros 
$\delta_{ij}=0$ for all rows $i \in \{\ell_1 ,\ldots, \ell_{\jr-1}\}$ and only the remaining $m-\jr+1$ indicators  in column $j$ are subject 
to variable selection.
\end{proof}

\begin{proof}[Proof of Eq.~\eqref{qapprox}]
Note that $1 -\pq{\jr}$ can be expressed 
with the help of  two  ratios of gamma functions:
\begin{eqnarray*}
1 -\pq{\jr}= 1- 
\frac{\Gamma (\gammaPB + m -\jr+1)}{\Gamma(\alphaPB/\Hm \gammaPB + \gammaPB + m -\jr+1)}
\frac{\Gamma (\alphaPB/\Hm \gammaPB+ \gammaPB)}{\Gamma( \gammaPB)}.
\end{eqnarray*}
Using that $\Gamma(x)/\Gamma(x+\Delta x) \approx x^{-\Delta x}$, we obtain for large $m$ with $x=\gammaPB + m -\jr+1 \approx m$  and $\Delta x=\alphaPB \gammaPB/\Hm  $:
\begin{eqnarray*}
\frac{\Gamma (\gammaPB + m -\jr+1)}{\Gamma( \gammaPB + m -\jr+1 + \alphaPB/\Hm \gammaPB )} \approx  m^{-\alphaPB \gammaPB/\Hm}=
\exp(-\alphaPB \gammaPB/\Hm \log m) \approx 
1- \alphaPB \gammaPB/\Hm \log m.
\end{eqnarray*}
The second ratio can be approximated via a Taylor expansion around $\gammaPB$:
\begin{eqnarray*}
\frac{\Gamma ( \gammaPB + \alphaPB/\Hm \gammaPB)}{\Gamma( \gammaPB)} \approx 
\frac{\Gamma ( \gammaPB) + \alphaPB/\Hm \gammaPB \Gamma'(\gammaPB)}{\Gamma( \gammaPB)} =
1+ \alphaPB/\Hm \gammaPB \psi( \gammaPB).
\end{eqnarray*}
Ignoring terms that are quadratic in  $1/\Hm$,  we obtain
the asymptotic approximation  
\begin{eqnarray*} %
\qtilde \approx \frac{\alphaPB \gammaPB}{\Hm} (\log m - \psi(\gammaPB))
\end{eqnarray*}
 for 
the success probability $1 -\pq{\jr}$ for large $m$.
Based on $\qtilde$, the prior $H |\alphaPB, \gammaPB,
 \Hm$ is approximated for large $m$ by the binomial distribution  $H |\alphaPB, \gammaPB,
 \Hm \sim \mbox{Bin}(\Hm, \qtilde)$, with prior expectation $\Hm, \qtilde$. 
Multiplying $\qtilde$ with $\Hm$ yields  Eq.~\eqref{qapprox}.
\end{proof}

\begin{proof}[Proof of \eqref{lemma4}]
Since $\mathbb{E}[\tau_j \mid \alphaPB, \Hm]=\alphaPB/(\alphaPB+\Hm)$,
we have 
\[
\mathbb{E}\left[N_m\,\middle|  \,\{\ell_j\}, \alphaPB, \Hm, H=h\right]
= h + \frac{\alphaPB}{\alphaPB+\Hm} \sum_{j=1}^h (m-\ell_j),
\]
with the same arguments in the proof of Lemma \ref{lem:priorsparsity}.
Under the \citet{fruehwirth2023GCUSP} prior,  each row has the same conditional probability to be a pivot given $H=h$, hence  
\eqref{resultA} holds as well. Exploiting in addition 
the identify \eqref{ljKi}, we obtain \eqref{lemma4}.
\end{proof}

\bibliographystyle{apalike}
\bibliography{biblio.bib}

\newpage 
\section{Supplementary Materials}

\subsection{Additional details on Section 4}

\subsubsection{Synthetic data simulation}

We provide here additional details on the data-generating mechanism used in the simulation study. For each replicate, data were generated from the Gaussian factor model in (2), with loading matrix $\Lambda\in\mathbb{R}^{m\times H}$ and idiosyncratic covariance matrix equal to the identity. 

In Scenario~1, $\Delta$ followed a generalized lower triangular structure with irregularly spaced pivots. The first pivot was fixed at row~1, while the remaining pivot positions were generated randomly within the upper fifth of the rows. Conditional on the pivot positions, each admissible below-pivot entry was independently set equal to one with probability $0.95$. Thus, although the GLT restriction determined which entries could be nonzero, the precise sparsity pattern varied from replicate to replicate. This scenario was intended to represent a setting with relatively few and predominantly dense factors. The values of $(m,H)$ were $
(20,3), (50,8),$ and $(100,10).$

Scenario~2 used the same mechanism for generating irregular pivot positions, but the admissible entries below each pivot were independently activated with probability $0.5$. This generated substantially sparser factors, while retaining considerable heterogeneity in both pivot positions and within-column sparsity across replicates. The corresponding configurations were $(20,5), (50,10),$ and $(100,15).$

Scenario~3 was constructed to generate a block-structured loading matrix. The variables were divided into two blocks and the $H$ factors were allocated between them as evenly as possible. Within each block, the corresponding submatrix of $\Delta$ was assigned a positive lower triangular structure, while entries linking a factor to variables outside its associated block were fixed at zero. Hence, the resulting covariance structure was characterized by two groups of variables driven by separate sets of latent factors. The values of $(m,H)$ were
 $(20,5), (50,10),$ and $(100,15),$ as in Scenario 2. 
 
Finally, Scenario~4 was designed to represent a more challenging high-dimensional and highly sparse configuration. Pivots were placed regularly across the rows of $\Delta$. Starting from each pivot, only a short sequence of approximately $m/H$ entries was set equal to one, whereas all remaining admissible entries in that column were fixed at zero. The resulting columns were therefore highly localized, with only limited overlap among neighbouring factors. At the same time, the number of latent factors was considerably larger than in the other scenarios:  $(20,8), (50,20),$ and $(100,40).$

Conditional on $\Delta$, the nonzero entries of $\Lambda$ were generated using factor-specific location parameters. Specifically, independently for $h=1,\ldots,H$,

$$
\mu_h\sim
0.2\,\delta_{-1.5}
+
0.4\,\delta_{0}
+
0.4\,\delta_{1.5},
$$
and, for every active entry,
$$
\lambda_{jh}\mid \delta_{jh}=1,\mu_h
\sim N(\mu_h,0.5).
$$
Equivalently,
$$
\lambda_{jh}
=
\delta_{jh}\bigl(\mu_h+\varepsilon_{jh}\bigr),
\qquad
\varepsilon_{jh}\overset{\mathrm{iid}}{\sim}N(0,0.5).
$$
This construction produces columns whose nonzero entries may be centered around zero or display predominantly positive or negative values. When $\mu_h\geq0$, the corresponding pivot loading was constrained to be positive by taking its absolute value. No analogous sign constraint was imposed when $\mu_h=-1.5$.

To avoid active entries having a negligible magnitude relative to the overall scale of their row, a final adjustment was applied to the generated loading matrix. For row $j$, define

$$
v_j=1+\sum_{\ell=1}^H\lambda_{j\ell}^2.
$$
A nonzero loading was classified as practically vanishing whenever
$
|\lambda_{jh}|/v_j<0.02.
$
Whenever this condition was met, its magnitude was increased while preserving its sign, according to

$$
\lambda_{jh}
\leftarrow
\left\{\min_{r=1,\ldots,m}v_r+10^{-6}\right\}
\operatorname{sign}(\lambda_{jh}).
$$

Thus, active entries identified as too small were deliberately replaced by loadings of non-negligible magnitude, preventing the simulation design from containing nominally active entries carrying essentially no signal.

The idiosyncratic covariance matrix was fixed to
the identity  in all scenarios.

\subsubsection{Metrics}

Before computing the area under the curve (AUC) metric, we align the generic $s$th MCMC draw of $\Delta_s$, with $H_s$ columns, to the true matrix $\Delta_0$, with $H_0$ columns, using a greedy matching procedure. We visit the first $\min(H_s,H_0)$ columns of $\Delta_0$ in their original order and match each to the unused column of $\Delta_s$ with the largest number of matching entries, counting both zeros and ones. Ties are resolved by selecting the first available column. The aligned matrix has $H_0$ columns: if $H_s<H_0$, its remaining columns are set to zero; if $H_s>H_0$, unmatched estimated columns are discarded. 
\newpage

\subsection{Additional details on Section 5}

\begin{table}[ht]
\centering
\caption{Posterior summaries for the 20 random subsamples of the original data. }
\begin{tabular}{rrrrrr}
  \hline
 Replicate \# & $\mathbb{E}(H \mid y)$ & Posterior mode & $\P(H=5\mid y)$ & $\mathbb{E}\left[\frac{N_m}{mH}\mid y\right]$ \\ 
  \hline
1 & 6.05 & 6.00 & 0.33 & 0.26 \\ 
  2 & 5.95 & 6.00 & 0.09 & 0.27 \\ 
  3 & 5.02 & 5.00 & 0.99 & 0.31 \\ 
  4 & 5.76 & 5.00 & 0.62 & 0.28 \\ 
  5 & 5.01 & 5.00 & 0.99 & 0.25 \\ 
  6 & 5.44 & 5.00 & 0.58 & 0.25 \\ 
  7 & 5.01 & 5.00 & 0.99 & 0.26 \\ 
  8 & 5.06 & 5.00 & 0.94 & 0.28 \\ 
  9 & 5.88 & 6.00 & 0.16 & 0.27 \\ 
  10 & 5.01 & 5.00 & 0.99 & 0.26 \\ 
  11 & 5.01 & 5.00 & 0.99 & 0.28 \\ 
  12 & 5.09 & 5.00 & 0.92 & 0.26 \\ 
  13 & 5.01 & 5.00 & 0.99 & 0.27 \\ 
  14 & 5.63 & 6.00 & 0.38 & 0.24 \\ 
  15 & 5.80 & 6.00 & 0.27 & 0.26 \\ 
  16 & 5.26 & 5.00 & 0.75 & 0.37 \\ 
  17 & 5.12 & 5.00 & 0.89 & 0.24 \\ 
  18 & 5.02 & 5.00 & 0.98 & 0.31 \\ 
  19 & 5.39 & 5.00 & 0.61 & 0.25 \\ 
  20 & 5.01 & 5.00 & 0.99 & 0.27 \\ 
   \hline
\end{tabular}
\end{table}

\begin{figure}[t]
\centering
\includegraphics[height=.6\textwidth, page=2]{empiricalVariance.pdf}
\includegraphics[height=.6\textwidth]{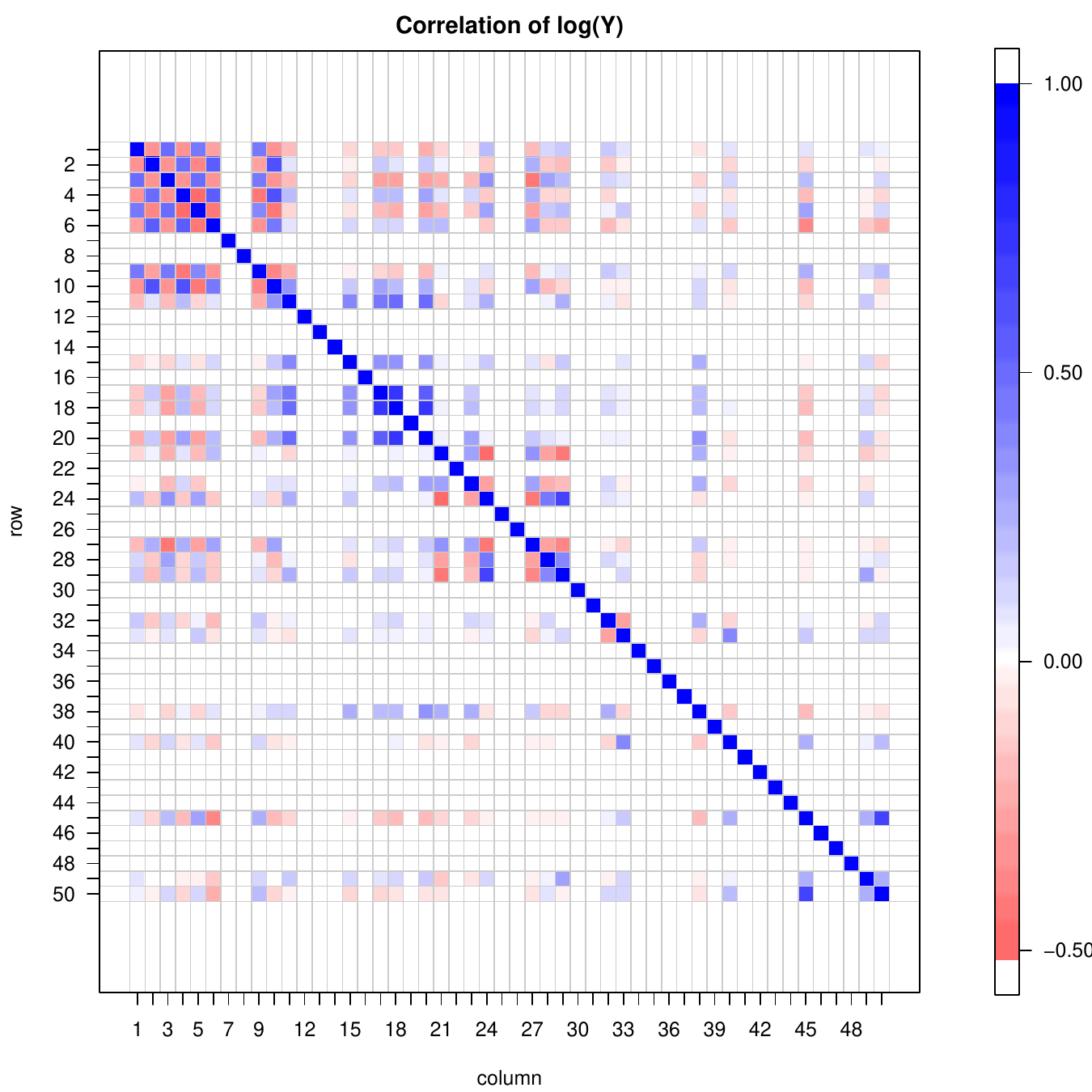} 
\caption{Empirical correlations computed on raw $Y$ (upper panel) and $\log(Y)$ (lower panel). }
\label{fig:empirical}
\end{figure}

\begin{figure}[t]
\centering
\includegraphics[width=.9\textwidth]{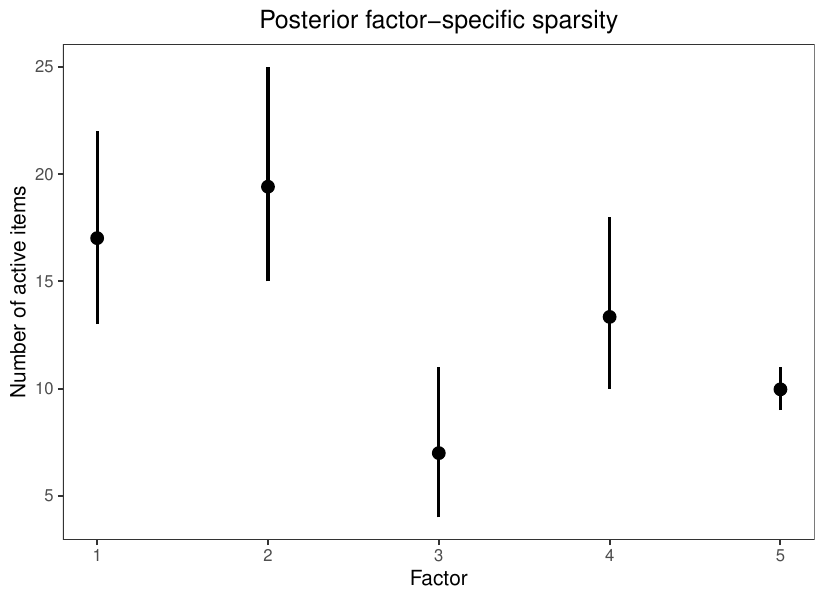}
\caption{Posterior mean (and 95\% credible intervals) for the factor-specific sparsity for the analysis reported in Section~5.}
\label{fig:facsparsity}
\end{figure}

\begin{figure}[t]
\centering
\includegraphics[width=.9\textwidth]{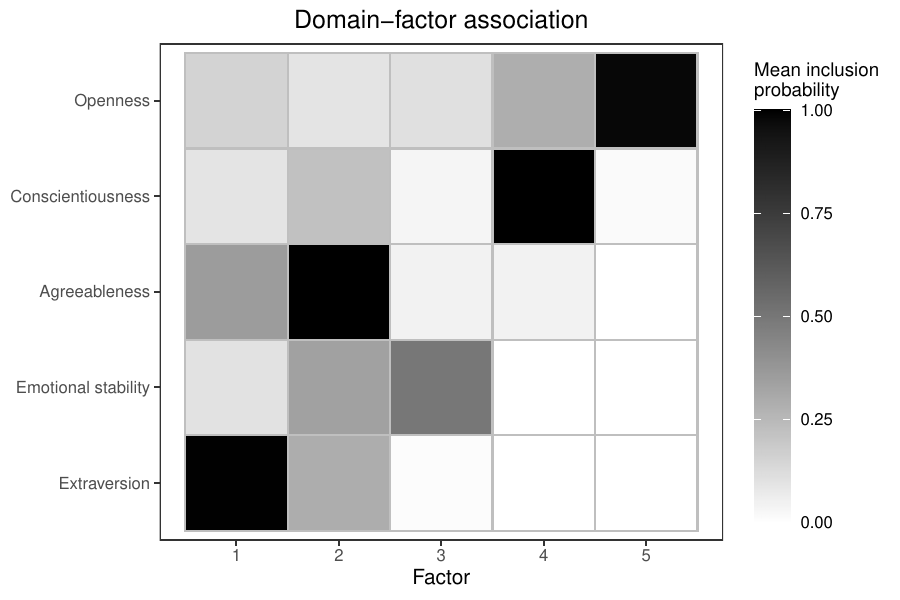}
\caption{Posterior mean inclusion probability for each factor and block of items for the analysis reported in Section~5.}
\label{fig:facitem}
\end{figure}



\end{document}